\newtheorem{theorem}{Theorem}[section]
\newtheorem{definition}[theorem]{Definition}
\newtheorem{lemma}[theorem]{Lemma}
\newcommand{\eps}{\varepsilon}
\newcommand{\ud}{\,\mathrm{d}}
\newcommand\tp{\textsc{TilePacking}}
\newcommand\greedy{\textsc{GreedyPacking}}
\DeclareMathOperator*{\area}{area}
	\DeclareFontShape{T1}{cmr}{m}{scit}{<->ssub*cmr/m/sc}{}%
\definecolor{utforest}{RGB}{0,103,90}
\definecolor{utgreen}{RGB}{52,178,53}
\definecolor{utorange}{RGB}{236,122,8}
\definecolor{utyellow}{RGB}{254,209,0}
\definecolor{utnavy}{RGB}{0,44,95}
\definecolor{utblue}{RGB}{99,177,229}
\definecolor{utpurple}{RGB}{79,45,127}
\definecolor{utmagenta}{RGB}{207,0,114}
\definecolor{utgold}{RGB}{136,123,27}
\definecolor{utgrey}{RGB}{97,82,88}
\definecolor{utlightgrey}{RGB}{173,175,175}
\definecolor{utsalmon}{RGB}{235,202,184}
\definecolor{utolive}{RGB}{136,123,27}
\definecolor{utred}{RGB}{198,12,48}
\definecolor{utwinered}{RGB}{130,36,51}
\definecolor{utdarkgrey}{RGB}{81,60,64}
	\title{A better lower bound for Lower-Left Anchored Rectangle Packing} 
	\author{
		Ruben~Hoeksma
		\thanks{University of Twente, Department of Applied Mathematics, The Netherlands. 
			\texttt{r.p.hoeksma@utwente.nl}, \texttt{m.t.maat@student.utwente.nl}
		}
		\and 
		Matthew~Maat\footnotemark[1]
	}
\begin{document}
		
	\maketitle
		
	\begin{abstract}
		Given any set of points~$S$ in the unit square that contains the
		origin, does a set of axis aligned rectangles, one for each point in~$S$, exist, such that each of them
		has a point in~$S$ as its lower-left corner, they are pairwise interior
		disjoint, and the total area that they cover is at least~$1/2$? This 
		question is also known as Freedman's conjecture (conjecturing that such 
		a set of rectangles does exist) and has been open since Allen Freedman 
		posed it in 1969. In this paper, we improve the best known lower bound 
		on the total area that can be covered from~$0.09121$ to~$0.1039$. Although 
		this step is small, we introduce new insights that push the limits of this 
		analysis. 
		
		Our lower bound uses a greedy algorithm with a particular order of the 
		points in~$S$. Therefore, it also implies that this greedy algorithm 
		achieves an approximation ratio of~$0.1039$. 
		We complement the result with an upper bound of~$3/4$ on the approximation 
		ratio for a natural class of greedy algorithms that includes the one that 
		achieves the lower bound.
	\end{abstract}

	\section{Introduction} 
		Consider the following packing problem that is best
		described by a two-player game between Alice and Bob. Alice is
		given the unit square~$U=[0,1]^2$ and may select any set of points~$S\subset
		U$ that includes the origin. After she selects this set, 
		Bob selects a set of axis-parallel rectangles 
		contained in the unit square,
		such that they are interior non-overlapping; none of them contains a
		point in~$S$ in their interior; and each has a point of~$S$ as their lower-left
		corner. We refer to such rectangles as \emph{lower-left-anchored axis-aligned 
		empty rectangles}.
		Bob's goal is to maximize the total area covered by his set of
		rectangles and Alice's goal is to minimize that same area.
		
		In 1969, Allen Freedman~\cite{Freedman69} conjectured that no matter what set of points Alice 
		chooses, Bob can always find a set of rectangles that satisfies the conditions 
		and covers an area of at least half of the unit square. This conjecture remains open 
		and it was even only in 2012 that Dumitrescu 
		and Tóth \cite{Dumitrescu2015,Dumitrescu2012} published the first result that shows that 
		Bob can always cover any 
		constant size area at all. Since then, the question of packing \emph{anchored 
		rectangles} in the unit-square has attracted more attention, yet, so far, no one 
		has improved the lower bound on what Bob can achieve. 
		
		In this paper, we improve the currently best lower bound by more than~$10\%$. 
		Our analysis pushes the boundary of what is possible with an analysis based 
		on the \tp{} algorithm that was suggested by Dumitrescu and 
		Tóth~\cite{Dumitrescu2015}. We make
		several key improvements, which allow for the better analysis. Our new lower
		bound is larger than~$0.1039$, which we believe pushes this analysis to its
		limits. 
		
		The \tp{} algorithm was designed by Dumitrescu and 
		Tóth in such a way that the \greedy{} algorithm that treats the points in the 
		same order covers an area that is at least as large. Therefore, the lower bound 
		implies that the \greedy{} algorithm covers at least a~$0.1039$-fraction of 
		what Bob's optimal solution would cover. Thus implying that the \greedy{} algorithm
		has a worst-case approximation guarantee for the optimization problem that Bob faces 
		of at least~$0.1039$. In 
		Section~\ref{sec:approximation}, we show, for a natural set of orders in which 
		the \greedy{} algorithm can treat the points in~$S$, that the \greedy{} algorithm 
		cannot have a worst-case approximation guarantee that is better than~$3/4$. 

	\subsection{Related work}
		The exact history of Freedman's Conjecture is not clear. While we know that 
		Allen Freedman posed the question in 1969 \cite{Freedman69}, it remains ambiguous if he 
		conjectured positively or negatively about the answer. Since that time,
		the problem has resurfaced at least twice~\cite{IBM2004,Winkler2007} before 
		Dumitrescu and Tóth booked the first real progress on it, 
		showing a lower bound of~$0.09121$~\cite{Dumitrescu2015}. 
		They also show that for any set $S$ an ordering exists such that the \greedy{} 
		algorithm that considers the points in that ordering gives the optimal solution 
		and that there exist orderings for which the \greedy{} algorithm cannot obtain 
		any constant covered area for some sets $S$. 
		
		From an approximation point of view, not much is known for the problem. We do not know 
		if it is NP-hard to decide if Bob can cover at least a certain area for a given 
		set of points, nor is there any approximation algorithm known beyond the one 
		implied by the lower bound. For variants of the problem, however, much more is known.
		When, instead of arbitrary rectangles, Bob is restricted to using only squares, 
		Balas er al.~\cite{Balas2017a} show an approximation ratio of~$1/3$. They also 
		show approximation guarantees for the setting where Bob may anchor rectangles 
		at any corner, achieving a~$7/12-O(1/n)$-approximation for rectangles and 
		a~$9/47$-approximation for squares in that setting, as well as a QPTAS and a PTAS 
		for rectangles and squares, respectively. 
		Akitaya et al.~\cite{DBLP:conf/mfcs/AkitayaJST18} prove that the latter variant 
		for squares is in fact NP-hard. Moreover, they show that for any instance with finite~$S$,
		the union of all feasible anchored square packings (which they call the reach) covers at 
		least an area of~$1/2$.
		Biedl at al.~\cite{Biedl2020} give a exact algorithms for 
		rectangles or squares anchored at any corner when 
		the points in~$S$ lie on the boundary of~$U$. 
		Antoniadis et al.~\cite{Antoniadis2019} introduce center anchored rectangle packing 
		(CARP), 
		where the point has to anchor to the center of a rectangle and the generalized 
		$(\alpha,\beta)$-anchored rectangle packing, where the point has to anchor 
		to the rectangle at the relative~$(\alpha,\beta)$-point. They show that CARP is NP-hard 
		and they show a PTAS for $(\alpha,\beta)$-anchored rectangle~packing. 
		
		From a more general point of view, Bob's Problem is among many different 
		problems that consider packing axis-aligned and
		interior-disjoint rectangles in a rectangular container.
		Some of the better-known problems in this category are 2D-knapsack and strip
		packing~\cite{DBLP:conf/focs/AdamaszekW13, DBLP:conf/soda/BansalK14}, 
		and the problem of finding a maximum-area independent set of given
		rectangles~\cite{DBLP:conf/soda/AdamaszekW15,
			DBLP:journals/ijcga/BeregDJ10, DBLP:journals/algorithmica/BeregDJ10}. 
		Rad\'{o} and Rado also formulated a whole range of similar 
		problems~\cite{Rado1,Rado2,Rado3,Rado0}.
	\section{Preliminaries} 
		Consider a set~$R$ of interior-disjoint axis-aligned rectangles in
		the unit square~$U=[0,1]^2$ and a set~$S\subset U$ that contains the origin. 
		For each point~$p\in S$ let~$x(p)$ and~$y(p)$ denote~$p$'s~$x$- and~$y$-coordinate, respectively.
		A rectangle is \emph{lower-left anchored} at a
		point~$p$ if~$p$ is the lower-left corner of that rectangle.
		We say that~$R$ is a \emph{lower-left anchored rectangle packing} (LLARP) of~$S$ 
		if each rectangle
		in~$R$ is an anchored rectangle, none of the rectangles contains any point in~$S$ 
		in its interior, and there is one anchored rectangle\footnote{Here, we consider a 
		single point or a line as a degenerate rectangle with zero area.} for each point.
	
		For any shape or set of shapes,~$P$, we denote by~$\area(P)$ the total area of 
		that shape or the total area of the union of the set of shapes. We can now define 
		Bob's Problem as follows.
		
		\begin{definition}[Bob's Problem]
			Given a set of~$n$ points~$S\subset U$ that contains the origin, 
			find a lower-left anchored rectangle packing~$R$ that maximizes~$\area(R)$.
		\end{definition}	
	
		Next we define two related algorithms, the \greedy{} algorithm and the \tp{} algorithm.
		Both algorithms can be defined for any order of the points in~$S$. For 
		this paper, we consider both only for an  
		order~$\succ$, such that~$p\succ q$ if~$x(p)+y(p)>x(q)+y(q)$. That is, 
		points are ordered increasingly according to the sum of their coordinates 
		and ties are broken arbitrarily but consistently.
		
		The \greedy{} algorithm finds an anchored rectangle packing as follows. 
		Iterate over all points in~$S$ in order of~$\succ$. Initiate~$R=\emptyset$. Consider the 
		current point and find a maximum area rectangle anchored at that point, 
		that is interior-disjoint with all rectangles in~$R$ and all points 
		in~$S$. Add this rectangle to~$R$. After the last iteration, return~$R$.
		
		The \tp{} algorithm first partitions the unit square into tiles, one for each
		point in~$S$ and then returns a set of rectangles with one largest-area
		rectangle within each tile. From the order~$\succ$, we obtain the tiling as
		follows. Iterate over all points in~$S$ in order of~$\succ$. From the current
		point cast two axis-aligned rays, one upwards and one to the right until they
		hit either the boundary of the unit square or one of the rays of a previously
		considered point. See also Figure~\ref{fig:tiling}. 
		Finally, from each tile, add a largest-area rectangle to~$R$. Note that any 
		rectangle with maximum area in a tile corresponding to a point~$p$ has~$p$ 
		as its lower-left~corner.
		
		The tiling of the \tp{} algorithm results in tiles that are simple rectilinear 
		polygons. Each concave corner of a tile corresponds to another point in~$S$.
		Intuitively, the
		tiling is such that the tile of a point~$p\in S$ contains all the area of the
		unit square that can never be covered by a rectangle anchored in a point
		$p'\in S$ such that~$p'\succ p$. From this intuition it is immediately clear
		that the \greedy{} algorithm always covers at least as much as what the \tp{}
		algorithm does. This is formalized by Lemma~\ref{lem:tilepacking}.

		\begin{lemma}[\cite{Dumitrescu2015}, Lemma~2.1]\label{lem:tilepacking}
			For each point~$p\in S$, the \greedy{} algorithm adds a rectangle anchored 
			at~$p$ with area at least as large as the rectangle anchored at~$p$ that 
			the \tp{} algorithm adds.
		\end{lemma}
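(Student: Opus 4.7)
The plan is to show that the rectangle $T_p$ placed by \tp{} at $p$ is itself a feasible candidate for \greedy{} when \greedy{} processes $p$; because \greedy{} selects a maximum-area feasible rectangle, this will give $\area(G_p) \ge \area(T_p)$. Denoting Greedy's rectangle at $p$ by $G_p$, the three feasibility conditions to check are: $T_p$ is anchored at $p$; $\operatorname{int}(T_p) \cap (S \setminus \{p\}) = \emptyset$; and $T_p$ is interior-disjoint from every rectangle $G_{p'}$ that \greedy{} has already placed for $p' \succ p$. The first is exactly the remark in the text that a largest-area rectangle in $p$'s tile has $p$ as its lower-left corner.

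For the second condition, I would prove the stronger statement that $\operatorname{int}(\text{tile}(p))$ contains no point of $S \setminus \{p\}$. Since $p$ is the lower-left corner of its tile, every interior point $(x,y)$ of the tile satisfies $x > x(p)$ and $y > y(p)$; hence a point $q \in S \setminus \{p\}$ in $\operatorname{int}(\text{tile}(p))$ would satisfy $x(q)+y(q) > x(p)+y(p)$, so $q \succ p$. But then \tp{} casts two rays from $q$ before processing $p$, and these rays---meeting at $q$ inside the claimed tile of $p$---contradict that a tile is a single face of the ray arrangement. Since $T_p$ is contained in the closed tile, $\operatorname{int}(T_p) \subseteq \operatorname{int}(\text{tile}(p))$, giving (ii).

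For the third condition, I would prove by induction on $\succ$ the stronger claim that $G_{p'} \subseteq \bigcup_{p'' \succeq p'} \text{tile}(p'')$ for every $p' \in S$. Applied to each $p' \succ p$, this immediately yields interior-disjointness of $G_{p'}$ with $\text{tile}(p)$, and therefore with $T_p$. The base case is the first point in $\succ$-order, whose tile is a rectangle abutting the upper-right corner of $U$ and inside which Greedy's only valid candidate lies. For the inductive step, I would argue that if $G_{p'}$ entered the tile of some $p''' \prec p'$, then $G_{p'}$ would have to cross some ray $r$ of the arrangement; a case analysis on the anchor $q$ of $r$ then yields a contradiction: if $q \succ p'$, the inductive hypothesis places the previously-added $G_q$ on the far side of $r$ (since $r$ bounds $\text{tile}(q)$), so $G_{p'}$ cannot cross $r$ without overlapping $G_q$; if $q \prec p'$, the geometry of a rectangle anchored at $p'$ straddling $r$ forces $q$ into $\operatorname{int}(G_{p'})$, violating \greedy's $S$-avoidance. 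The main obstacle is precisely this geometric case analysis---distinguishing how $q$'s vertical versus horizontal ray interacts with $G_{p'}$'s edges and combining that with what the induction guarantees about $G_q$; once it is carried out, conditions (i)--(iii) together show $T_p$ is a valid \greedy{} candidate at $p$, and the lemma follows.
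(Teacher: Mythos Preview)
The paper does not actually prove this lemma; it only restates it from \cite{Dumitrescu2015} with the one-sentence intuition preceding it. Your overall strategy---show that the \tp{} rectangle $T_p$ is itself a feasible choice for \greedy{} at step $p$---is exactly the standard argument, and your treatment of conditions (i) and (ii) is fine.

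Condition (iii), however, is where you make life much harder than necessary, and your sketch has a real gap. In the case $q\succ p'$ you claim the inductive hypothesis puts $G_q$ ``on the far side of $r$'' so that $G_{p'}$ cannot cross $r$ without overlapping $G_q$. But the hypothesis only gives $G_q\subseteq\bigcup_{p''\succeq q}\mathrm{tile}(p'')$; it does not force $G_q$ to occupy the whole stretch of $r$ where $G_{p'}$ might cross, so the claimed overlap is not guaranteed. In the case $q\prec p'$ your assertion that $q$ is forced into $\mathrm{int}(G_{p'})$ also fails: if, say, $r$ is $q$'s upward ray, $x(p')<x(q)$ and $y(p')>y(q)$ are perfectly compatible with $G_{p'}$ crossing $r$ while $q$ lies strictly below $G_{p'}$.

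All of this is avoidable. The clean argument (which is essentially what Dumitrescu and T\'oth do) is to observe that the union of tiles already created when $p$ is processed is exactly $\bigcup_{q\succ p}D_q$, where $D_q=\{(x,y)\in U:x\ge x(q),\ y\ge y(q)\}$; consequently $\mathrm{tile}(p)=D_p\setminus\bigcup_{q\succ p}D_q$. For any $p'\succ p$ the \greedy{} rectangle $G_{p'}$ is anchored at $p'$ and therefore satisfies $G_{p'}\subseteq D_{p'}$. Hence $\mathrm{int}(T_p)\subseteq\mathrm{int}(\mathrm{tile}(p))$ is disjoint from every $G_{p'}$ with $p'\succ p$, with no induction or case analysis needed. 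Replacing your argument for (iii) by this two-line observation makes the proof complete.
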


	\section{An improved lower bound for Bob's Problem} 

		In this section, we prove an improved lower bound of~$0.10390$ on the 
		area that Bob can cover for any set of points~$S$ that includes the origin
		by using the \greedy{} algorithm. 
		Our proof follows the one by
		Dumitrescu and Tóth~\cite{Dumitrescu2015}. Like they do, we prove that the
		\tp{} algorithm finds a solution that covers at least a constant
		fraction of the unit square. The advantage of considering the \tp{}
		algorithm over the \greedy{} algorithm is that, while the \greedy{} algorithm covers
		at least as much as the \tp{} algorithm, the \tp{} algorithm can
		be easily analyzed locally, per tile. For each tile, we can simply look at its
		largest contained rectangle. When this rectangle covers less than a~$1/\beta$
		fraction of the tile we call the tile a~\emph{$\beta$-tile}. We first bound the area
		of a~$\beta$-tile in terms of the area of two specially defined
		parallelograms (Lemma~\ref{lem:lb1}). Then, we bound the total area of 
		the parallelograms through a charging scheme (Lemmata~\ref{lem:lb2} and~\ref{lem:lb3}). 
		This implies a bound on the total area of the~$\beta$-tiles. 
		The remaining area of the unit square must therefore consist of tiles which have 
		a largest contained rectangle that has an area of at least a~$1/\beta$ fraction 
		of the area of the tile. Naturally, the set of~$\beta'$-tiles includes the set 
		of~$\beta$-tiles for~$\beta'\le\beta$ and our bound on the total area of~$\beta$-tiles 
		is decreasing in~$\beta$. We optimize the analysis by integrating over all 
		possible~$\beta$ that contribute positively to the area. 

		Our analysis improves on that of Dumitrescu and Tóth~\cite{Dumitrescu2012}
		in three ways. We introduce a variable~$\alpha$ that varies the definition 
		of the tips of~$\beta$-tiles (to be defined later), we bound the area of 
		such tiles by parallelograms instead of trapezoids, and we bound all~$\beta$-tiles 
		at once with the use of the inequality of arithmetic and geometric means.
		
		\begin{figure} 
			\centering 
			\begin{tikzpicture}[scale=5]
				\tikzstyle{point} = [fill=utblue, inner sep=2pt, draw=black, circle]
				\coordinate (bottomright) at (1,0);
				\coordinate (topright) at (1,1);
				\coordinate (topleft) at (0,1);
				\draw[thick] (0,0) -- (bottomright) -- (topright) -- (topleft) -- cycle;
				\node[point, label ={[label distance=-5pt]above right:$(0,0)$}] (O) at (0,0) {};
				
				\node[point] (p1) at (0.6,0.9) {};
				\draw[very thick] (p1.east) -- (p1 -| bottomright);
				\draw[very thick] (p1.north) -- (p1 |- topright);
				\node[point] (p2) at (0.65,0.65) {};
				\draw[very thick] (p2.east) -- (p2 -| bottomright);
				\draw[very thick] (p2.north) -- (p2 |- p1);
				\node[point] (p3) at (0.75,0.3) {};
				\draw[very thick] (p3.east) -- (p3 -| bottomright);
				\draw[very thick] (p3.north) -- (p3 |- p2);
				\node[point] (p4) at (0.3,0.7) {};
				\draw[very thick] (p4.east) -- (p4 -| p2);
				\draw[very thick] (p4.north) -- (p4 |- topright);
				\node[point] (p5) at (0.1,0.85) {};
				\draw[very thick] (p5.east) -- (p5 -| p4);
				\draw[very thick] (p5.north) -- (p5 |- topright);
				\node[point] (p6) at (0.55,0.1) {};
				\draw[very thick] (p6.east) -- (p6 -| topright);
				\draw[very thick] (p6.north) -- (p6 |- p4);
				\node[point] (p7) at (0.2,0.35) {};
				\draw[very thick] (p7.east) -- (p7 -| p6);
				\draw[very thick] (p7.north) -- (p7 |- p5);
			\end{tikzpicture}
			\caption{The tiling of a point set in the unit square that the \tp{} algorithm uses.} 
			\label{fig:tiling}
		\end{figure}
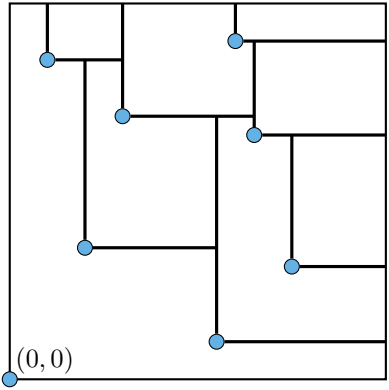

		We name several parts of the tiles that the \tp{} algorithm uses.
		We denote the tiles by~$t_1,\ldots,t_n$, one for each point in~$S$. 
		A tile~$t_i$ is called \emph{a~$\beta$-tile} if the largest possible 
		anchored rectangle contained in~$t_i$ has an area of less than~$\frac{1}{\beta}\area(t_i)$
		for some fixed constants~$\alpha$ and~$\beta\geq 3+2\alpha$.
		Let the \emph{right tip} of~$t_i$ be the smallest sub-polygon of~$t_i$
		that contains all points of~$t_i$ to the right of a vertical line 
		through a concave corner of~$t_i$ 
		and has an area of at least~$\frac{\alpha}{\beta}\area(t_i)$. Similarly, let the
		\emph{upper tip} of~$t_i$ be the smallest sub-polygon of~$t_i$
		that contains all points of~$t_i$ above a horizontal line through a convex corner of~$t_i$
		and has an area of at least~$\frac{\alpha}{\beta}\area(t_i)$.
		We refer to the polygon that consists of all points in~$t_i$ that are not part of its upper 
		and right tip as the \emph{main body},~$t_i'$.  
		By~$a_i$ and~$b_i$ we denote the bottom and left edge of~$t_i$, respectively. 
		Similarly we denote by~$a_i'$ the bottom edge of the main body~$t_i'$ and 
		by~$b_i'$ the left edge of~$t_i'$.
		Figure~\ref{fig:lowerbound1a} depicts a~$\beta$-tile and the mentioned parts.
		
		Next, we define a number of support shapes that let us bound the area of a~$\beta$-tile. 
		Let~$\Delta_i$ be the isosceles right triangle with~$a_i$ as
		its top edge and its~$90^\circ$ angle at the bottom right point
		of~$t_i$. Similarly, let~$\Gamma_i$ be the isosceles right triangle with~$b_i$
		as its right edge and its~$90^\circ$ angle at the top left point of~$t_i$.
		Let~$\lambda$ be a constant such that $\lambda\in(0,\alpha)$ and let~$A_i$ 
		be the parallelogram with base~$a_i'$ and height~$\lambda|a_i'|$ and two 
		sides parallel to the diagonal edge of~$\Delta_i$. Similarly, let~$B_i$ be the
		parallelogram with base~$b_i'$, two sides parallel to the diagonal edge of~$\Gamma_i$,
		and height equal to~$\lambda|b_i'|$. 
		Figure~\ref{fig:lowerbound1a} also depicts these support shapes.
		
		Finally, we state a reformulation of a lemma from~\cite{Dumitrescu2015} 
		that we need in the proof of~Lemma~\ref{lem:lb1}.

		\begin{lemma}[Lemma~3.1 in \cite{Dumitrescu2015}]\label{lem:DT3.1} 
			Let~$t$ be a $\beta$-tile with height~$h$ and width~$w$, then 
			\[
				\area(t)<\frac{\beta}{e^{\beta-1}}hw\,.
			\]
		\end{lemma}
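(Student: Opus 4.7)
The plan is to exploit the staircase structure of a tile together with the constraint that every anchored rectangle it contains has small area. Place the anchor point of $t$ at the origin so that $t$ is a rectilinear staircase polygon lying inside $[0,w]\times[0,h]$, with concave corners $(x_1,y_1),\dots,(x_k,y_k)$ satisfying $0=x_0<x_1<\cdots<x_k=w$ and $h=y_1>y_2>\cdots>y_k>0$. Let $M$ denote the area of the largest anchored rectangle contained in $t$. The $\beta$-tile hypothesis gives the strict inequality $\area(t)>\beta M$. The key geometric observation is that for every $i$ the rectangle $[0,x_i]\times[0,y_i]$ lies inside $t$ and is empty of points of $S$ in its interior (its upper-right corner is a concave corner of the tile, hence an element of $S$, but lies on the boundary), so $x_iy_i\le M$.

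Next I would bound $\area(t)$ by an integral. Decomposing the tile into the vertical strips $(x_{i-1},x_i]\times[0,y_i]$ gives $\area(t)=\sum_{i=1}^k (x_i-x_{i-1})y_i$. The inequality $x_iy_i\le M$ implies $y_i\le M/x$ for every $x\in(x_{i-1},x_i]$, and we always have $y_i\le h$. Hence
\[
	\area(t)\ \le\ \int_0^{w}\min\!\bigl(h,\, M/x\bigr)\,\ud x.
\]
Setting $x^{*}=M/h$ (which is at most $w$ since $M\le wh$) and splitting the integral at $x^{*}$ gives the closed form $\area(t)\le M\bigl(1+\ln(wh/M)\bigr)$.

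Finally I would combine the two bounds. Writing $u=wh/M\ge 1$, the inequality $\beta M<\area(t)\le M(1+\ln u)$ yields $u>e^{\beta-1}$ strictly. Now the function $\phi(u)=(1+\ln u)/u$ has derivative $-\ln u/u^{2}$, which is negative for $u>1$, so $\phi$ is strictly decreasing on $(1,\infty)$. Therefore
\[
	\area(t)\ \le\ M\,(1+\ln u)\ =\ wh\,\phi(u)\ <\ wh\,\phi(e^{\beta-1})\ =\ \frac{\beta}{e^{\beta-1}}\,wh,
\]
which is the desired bound.

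The main obstacle I expect is not the calculation but the justification that $x_iy_i\le M$ for each concave corner: one must argue carefully that each such sub-rectangle is contained in $t$ and free of points of $S$ in its interior, so that it is a legitimate candidate for the maximum anchored rectangle. Once that geometric fact is in hand, the rest is the staircase-to-integral estimate and the monotonicity of $\phi$, both of which are routine.
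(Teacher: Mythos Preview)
The paper does not give its own proof of this lemma; it is quoted verbatim as Lemma~3.1 of Dumitrescu and T\'oth~\cite{Dumitrescu2015} and simply invoked. Your argument is correct and is essentially the standard one from that source: bound the staircase height by $\min(h,M/x)$, integrate to obtain $\area(t)\le M(1+\ln(wh/M))$, and finish with the monotonicity of $u\mapsto(1+\ln u)/u$.

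One terminological slip worth fixing: with your parametrisation the points $(x_i,y_i)$ are the \emph{convex} corners of the staircase (the top-right corners of the vertical strips), not the concave ones; the concave corners sit at $(x_{i-1},y_i)$. Consequently your parenthetical that ``its upper-right corner is a concave corner of the tile, hence an element of~$S$'' is not accurate. This does not harm the proof, because the only fact you need is that the rectangle $[0,x_i]\times[0,y_i]$ is contained in the tile, which follows directly from $y_1\ge\cdots\ge y_i$; the definition of a $\beta$-tile compares the tile's area to the largest \emph{contained} anchored rectangle, and any such rectangle automatically has no point of~$S$ in its interior since points of~$S$ occur only on tile boundaries. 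So the ``main obstacle'' you flag is in fact not an obstacle at all once you drop the unnecessary appeal to concave corners.
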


		\begin{figure} 
			\centering %
			\includegraphics[width=10cm]{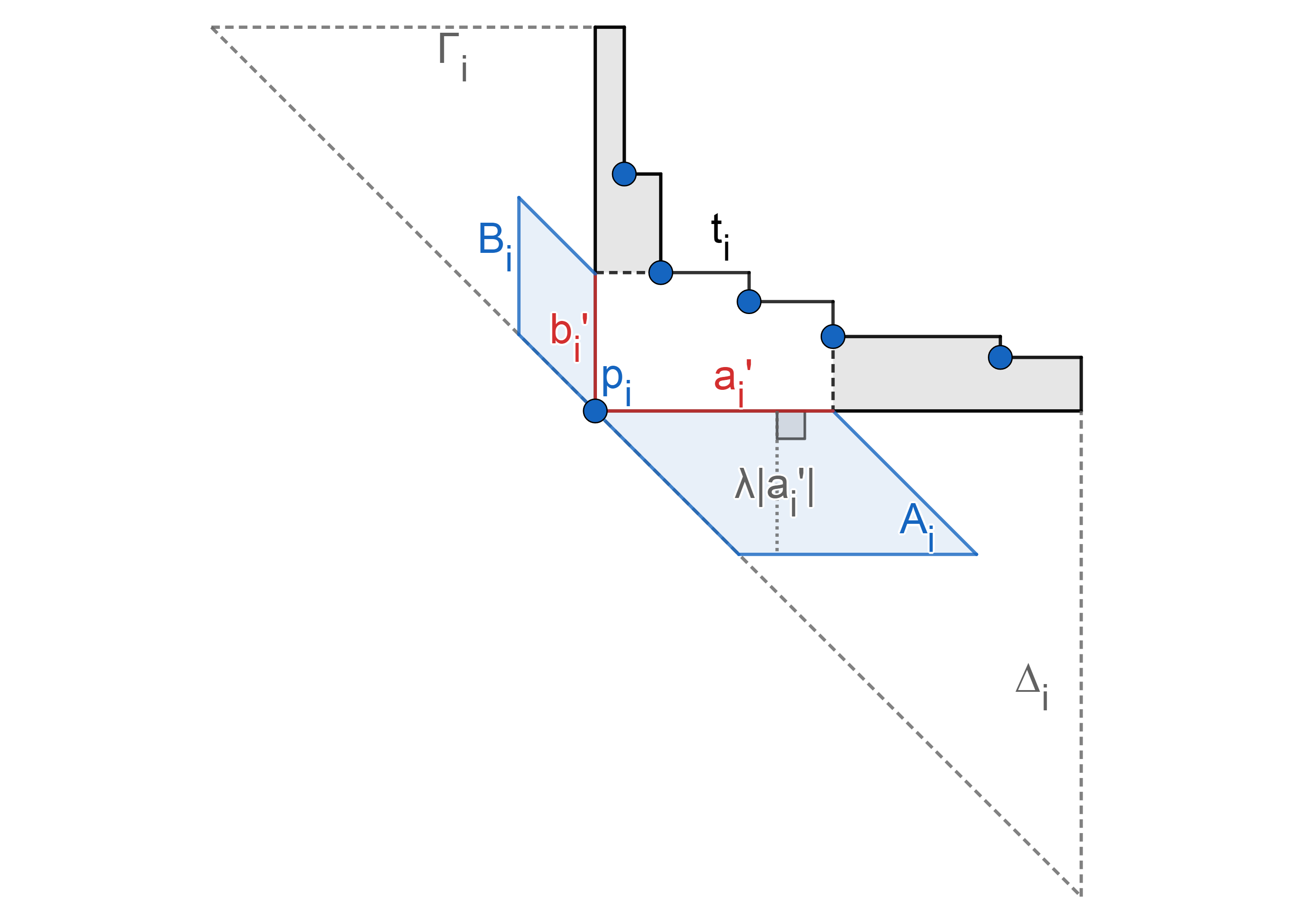}%
			\caption{A~$\beta$-tile~$t_i$, with its right and upper tip in grey.
				Triangles~$\Delta_i$ and~$\Gamma_i$ are marked with dashed lines,~$a_i'$
				and~$b_i'$ are depicted in red, and parallelograms~$A_i$ and~$B_i$ are depicted in blue.}%
			\label{fig:lowerbound1a}
		\end{figure}
	
		\begin{lemma} \label{lem:lb1} 
			The area of the~$\beta$-tile~$t_i$ is at most
			$\frac{\beta}{2\lambda e^{\beta-3-2\alpha}}\left(\area(A_i)+\area(B_i)\right)$. 
		\end{lemma}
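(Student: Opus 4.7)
After substituting $\area(A_i)=\lambda|a_i'|^2$ and $\area(B_i)=\lambda|b_i'|^2$, the factors of $\lambda$ cancel, and the target bound becomes
\[
\area(t_i)\le\frac{\beta\bigl(|a_i'|^2+|b_i'|^2\bigr)}{2\,e^{\beta-3-2\alpha}}.
\]

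The plan is to combine Lemma~\ref{lem:DT3.1} with the constraints imposed by the tips and the inequality of arithmetic and geometric means. First, apply Lemma~\ref{lem:DT3.1} to $t_i$ to get $\area(t_i)<\beta|a_i||b_i|/e^{\beta-1}$; after applying AM--GM in the form $|a_i||b_i|\le\tfrac{1}{2}(|a_i|^2+|b_i|^2)$, it suffices to prove the key inequality
\[
|a_i|^2+|b_i|^2\le e^{2+2\alpha}\bigl(|a_i'|^2+|b_i'|^2\bigr).
\]

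For the key inequality I use the tips together with the $\beta$-tile property $xf(x)<\area(t_i)/\beta$, where $f$ is the staircase height function of $t_i$ (and analogously for the width function in the $y$-direction). Integrating the lower bound $\int_{|a_i'|}^{|a_i|}f(x)\,\ud x\ge\tfrac{\alpha}{\beta}\area(t_i)$ on the right-tip area against $f(x)\le\area(t_i)/(\beta x)$ gives $|a_i|\ge e^\alpha|a_i'|$, and symmetrically $|b_i|\ge e^\alpha|b_i'|$. To close the gap, I will exploit the minimality in the definition of the tips: each tip is the \emph{smallest} sub-polygon cut at a concave corner with area at least $\tfrac{\alpha}{\beta}\area(t_i)$, so the sub-polygon cut at the next concave corner in each direction has area strictly less than $\tfrac{\alpha}{\beta}\area(t_i)$. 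Combining this with $xf(x)<\area(t_i)/\beta$ constrains how far $|a_i|$ and $|b_i|$ can extend beyond $|a_i'|$ and $|b_i'|$; this is where the extra factor of $e$ beyond the lower bound $e^\alpha$ (yielding the desired $e^{2+2\alpha}$ after squaring and summing) enters.

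The main obstacle is the key inequality. Notably, one cannot hope for the naive one-dimensional decomposition $|a_i|\le e^{1+\alpha}|a_i'|$ and $|b_i|\le e^{1+\alpha}|b_i'|$ separately: combining $\area(t_i)>\beta|a_i'||b_i'|$ (from the main body being an anchored rectangle in $t_i$) with Lemma~\ref{lem:DT3.1} forces $|a_i||b_i|>e^{\beta-1}|a_i'||b_i'|$, which for $\beta>3+2\alpha$ already exceeds $e^{2+2\alpha}|a_i'||b_i'|$ and thereby rules out such simultaneous one-dimensional upper bounds. The inequality must instead be established jointly, delicately leveraging the asymmetry between $|a_i'|$ and $|b_i'|$ that the condition $\beta\ge 3+2\alpha$ implicitly forces through the tip minimality and the staircase structure of $t_i$.
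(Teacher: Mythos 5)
Your proposal has a genuine gap, and it is not just an unfinished step: the ``key inequality'' $|a_i|^2+|b_i|^2\le e^{2+2\alpha}\bigl(|a_i'|^2+|b_i'|^2\bigr)$ to which you reduce the lemma is false in general. The tips are defined to have area at least $\frac{\alpha}{\beta}\area(t_i)$, so their size scales with the \emph{actual} area of the tile, whereas applying Lemma~\ref{lem:DT3.1} to $t_i$ with the fixed parameter $\beta$ only uses that $t_i$ is at least a $\beta$-tile; a tile can be far more extreme than that, and then the tips swallow almost all of $a_i$ and $b_i$. Concretely, take a staircase tile approximating the height function $f(x)=x_0/x$ on $[x_0,1]$ and $f\equiv 1$ on $[0,x_0]$ (points of $S$ placed along the hyperbola $xy=x_0$). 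Its largest anchored rectangle has area $x_0$ while $\area(t_i)=x_0(1+\ln(1/x_0))$, so for any fixed $\beta$ it is a $\beta$-tile once $x_0$ is small; but the right tip extends left to $x$-coordinate about $e^{-\frac{\alpha}{\beta}(1+\ln(1/x_0))}$, so $|a_i|/|a_i'|\approx e^{\frac{\alpha}{\beta}(1+\ln(1/x_0))}\to\infty$ as $x_0\to 0$ (and likewise $|b_i|/|b_i'|$), while your inequality would cap these ratios by the constant $e^{1+\alpha}$. Your integration argument only gives the lower bound $|a_i|\ge e^{\alpha}|a_i'|$, which points in the wrong direction, and the assertion $\area(t_i)>\beta|a_i'||b_i'|$ is unjustified (the main body is a staircase, not a rectangle); the fundamental problem is that no upper bound on $|a_i|/|a_i'|$ in terms of $\alpha$ and $\beta$ alone exists, so no ``delicate joint'' version of your key inequality can be salvaged.

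The idea you are missing is to apply Lemma~\ref{lem:DT3.1} not to $t_i$ but to its main body $t_i'$, whose width and height are exactly $|a_i'|$ and $|b_i'|$, so that $|a_i|$ and $|b_i|$ never enter the argument. By minimality of the tips (removing one staircase step removes a rectangle inside $t_i$ of area less than $\frac{1}{\beta}\area(t_i)$, and would leave area more than $\frac{\alpha}{\beta}\area(t_i)$), each tip has area less than $\frac{1+\alpha}{\beta}\area(t_i)$; hence $\area(t_i')\ge\frac{\beta-2-2\alpha}{\beta}\area(t_i)\ge\frac{1}{\beta}\area(t_i)$ using $\beta\ge 3+2\alpha$ (which also forces the two tips to be disjoint). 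Since any rectangle in $t_i'$ lies in $t_i$, the largest rectangle in $t_i'$ has area less than $\frac{1}{\beta}\area(t_i)\le\frac{1}{\beta-2-2\alpha}\area(t_i')$, i.e.\ $t_i'$ is itself a $(\beta-2-2\alpha)$-tile, and Lemma~\ref{lem:DT3.1} gives $\area(t_i')<\frac{\beta-2-2\alpha}{e^{\beta-3-2\alpha}}|a_i'|\,|b_i'|$. Combining this with $\area(t_i)\le\frac{\beta}{\beta-2-2\alpha}\area(t_i')$ and the AM--GM step $|a_i'|\,|b_i'|\le\frac{1}{2}\bigl(|a_i'|^2+|b_i'|^2\bigr)$ (which you correctly anticipated) yields exactly $\area(t_i)<\frac{\beta}{2\lambda e^{\beta-3-2\alpha}}\bigl(\area(A_i)+\area(B_i)\bigr)$.
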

	
		\begin{proof} 
			Note first that the area of the right tip is less than~$\frac{1+\alpha}{\beta}\area(t_i)$. 
			If its area was larger, we could move
			its left boundary to the next concave corner (one `step' on the staircase) 
			to the right, and,
			since the difference is a rectangle within~$t_i$, the area would be reduced
			by less than~$\frac{1}{\beta}\area(t_i)$. Therefore, the remaining area would be 
			greater than~$\frac{\alpha}{\beta}\area(t_i)$, which contradicts that the tip is 
			the smallest such polygon.
			Similarly, the area of the upper tip is less
			than~$\frac{1+\alpha}{\beta}\area(t_i)$. Therefore, the area of the main 
			body,~$t_i'$, is at least~$\frac{\beta-2-2\alpha}{\beta}\area(t_i)$, which is at
			least~$\frac{1}{\beta}\area(t_i)$, as~$\beta\geq 3+2\alpha$. This implies
			that the right and upper tip do not overlap, since if they did, the main body would
			be a rectangle and cannot have an area of at least~$\frac{1}{\beta}\area(t_i)$. 
			Furthermore, the area of the largest rectangle inside~$t_i'$ has area less than 
			\[
				\frac{1}{\beta}\area(t_i) \le \frac1\beta\frac{\beta}{\beta-2-2\alpha}\area(t_i')
				=\frac{1}{\beta-2-2\alpha}\area(t_i')\,. 
			\]
			Thus $t_i'$ is a $\frac{1}{\beta-2-2\alpha}$-tile and from Lemma~\ref{lem:DT3.1} 
			we get that 
			$\area(t_i')<\frac{\beta-2-2\alpha}{e^{\beta-3-2\alpha}}|a_i'||b_i'|$.

			Since,~$\area(A_i)=\lambda|a_i'|^2$,~$\area(B_i)=\lambda|b_i'|^2$ and~$\area(t_i')>\frac{\beta-2-2\alpha}{\beta}\area(t_i)$, we get 
			\begin{align*} 
				\area(t_i)&\le\frac{\beta}{\beta-2-2\alpha}\area(t_i')\\
				&<\frac{\beta}{\beta-2-2\alpha}\frac{\beta-2-2\alpha}{e^{\beta-3-2\alpha}}|a_i'||b_i'|\\
				&\leq \frac{\beta}{e^{\beta-3-2\alpha}}\frac{1}{2}(|a_i'|^2+|b_i'|^2)\\
				&=\frac{\beta}{2\lambda e^{\beta-3-2\alpha}}\left(\area(A_i)+\area(B_i)\right) \,, 
			\end{align*} 
			where the last inequality is due to the arithmetic mean-geometric mean inequality.
		 \end{proof}

		Next, we bound the total area of all parallelograms~$A_i$ of
		$\beta$-tiles. To do so, we define a directed graph~$G$, where the nodes
		correspond to the~$A_i$ parallelograms. If a parallelogram~$A_i$
		intersects some other parallelograms, then~$A_i$ gets exactly one outgoing edge
		to the node of the parallelogram~$A_j$ that intersects~$A_i$ and of which the
		corresponding line segment~$a_j'$ is the highest below~$a_i'$. 
		Here, we assume that all points have distinct $y$-coordinates. If this is not 
		the case, we introduce an ordering of the points $\succ_y$ such that $p\succ_y q$ 
		if $y(p)>y(q)$ and ties are broken arbitrarily, but consistently. Then,~$A_i$ 
		gets exactly one outgoing edge
		to the node of the parallelogram~$A_j$ that intersects~$A_i$ and is first 
		according to~$\succ_y$.
		Parallelograms that do not have an outgoing edge are \emph{at level 1}. All
		other parallelograms are \emph{at level~$k+1$} if they have an outgoing edge to a level~$k$ 
		parallelogram. This constructs an acyclic graph. 
		
		We use a charging scheme for the parallelograms, where the area of each 
		parallelogram~$A_i$ at level~$2$ and higher is charged to the unique 
		parallelogram at level~$1$ that~$A_i$ has a directed path to. Before that, 
		we first derive an upper bound on the total area of the level~$1$ parallelograms:
	
		\begin{lemma} \label{lem:lb2} 
			The total area of all~$A_i$-parallelograms at level~$1$ is at
			most~$\frac{2(1+\alpha)^2+\lambda(2+\alpha)}{2(1+\alpha)^2}$.
		\end{lemma}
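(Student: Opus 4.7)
The first step is to establish that any two level-$1$ parallelograms have disjoint interiors. If $A_i$ and $A_j$ are both at level~$1$ and overlap, with (say) $a_j'$ strictly below $a_i'$ according to $\succ_y$, then by the construction of $G$ the parallelogram $A_i$ has an outgoing edge to some intersecting parallelogram whose base lies strictly below $a_i'$, so $A_i$ is not at level~$1$---a contradiction. It follows that the total area of the level-$1$ parallelograms equals the area of their union.

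The plan is then to cover this union by an explicit region of the claimed area. Each $A_i$ has horizontal base $a_i' \subseteq [0,1]^2$, perpendicular height $\lambda|a_i'|\leq\lambda$, and two sides of slope~$-1$, so $A_i$ is contained in an axis-aligned box that protrudes outside $[0,1]^2$ by at most $\lambda|a_i'|$ in each coordinate direction. The portion of the union lying inside $[0,1]^2$ contributes at most $1$ to the total. For the overhang, I would treat boundary-adjacent $\beta$-tiles individually: the parallelograms that stick out of $[0,1]^2$ are precisely those whose bases lie within distance $\lambda|a_i'|$ of the relevant boundary, and each such protrusion is a triangle or trapezoid whose area is explicitly expressible in terms of $|a_i'|$ and $\lambda$. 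Summing these protrusions uses the disjointness of the bases $a_i'$ in the appropriate projection together with the right-tip bound
\[
  \tfrac{\alpha}{\beta}\area(t_i) \;\leq\; \area(\text{right tip}) \;<\; \tfrac{1+\alpha}{\beta}\area(t_i)
\]
from the proof of Lemma~\ref{lem:lb1}, applied here to relate $|a_i'|$ to the dimensions of~$t_i$.

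The main obstacle is that a naive accounting of the overhang yields only the weak bound $(1+\lambda)^2$, whereas the claimed bound is much sharper. Extracting the factor $1/(1+\alpha)^2$ in front of $\lambda$ requires using the tip-area constraint globally rather than tile-by-tile: a constant fraction of each boundary tile's area is absorbed by its right tip, which in turn limits how densely level-$1$ parallelograms can accumulate adjacent to the boundary of the unit square. Identifying the right quadratic form in the $|a_i'|$'s whose bound produces the numerator $2(1+\alpha)^2 + \lambda(2+\alpha)$---likely by an AM--GM application analogous to the one used in Lemma~\ref{lem:lb1}---is the most delicate step.
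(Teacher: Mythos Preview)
Your opening paragraph is correct and matches the paper: level-$1$ parallelograms are pairwise interior-disjoint, so it suffices to bound the area of a region containing them all. After that, however, your proposal stalls. You recognise that the naive estimate gives only $(1+\lambda)^2$, and you then conjecture that the improvement to $\frac{2(1+\alpha)^2+\lambda(2+\alpha)}{2(1+\alpha)^2}$ must come from a \emph{global} packing argument or an AM--GM step. This is the wrong direction.

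The actual argument is entirely local. The tip bound you quote, applied to a single tile, yields $(1+\alpha)|a_i'|<|a_i|$; since $a_i\subset U$ we have $|a_i|\le 1$, hence every $|a_i'|<\frac{1}{1+\alpha}$ and every $A_i$ has height $\lambda|a_i'|<\frac{\lambda}{1+\alpha}$. Moreover, because $\lambda<\alpha$, the bottom-right corner of $A_i$ lies only $\lambda|a_i'|$ to the right of $a_i'$, which is strictly left of the right end of $a_i$; thus $A_i\subset\Delta_i$ and there is \emph{no} overhang across $x=1$ at all. Together with the two easy linear constraints $y\ge -x$ (from the slope-$(-1)$ sides and $p_i\in U$) and $y\ge\frac{\lambda}{\alpha-\lambda}(x-1)$ (again from $(1+\alpha)|a_i'|<|a_i|$ and the right end of $a_i$ lying in $U$), every $A_i$ sits inside the hexagon with vertices
\[
(0,0),\ (0,1),\ (1,1),\ (1,0),\ \Bigl(\tfrac{1+\lambda}{1+\alpha},\tfrac{-\lambda}{1+\alpha}\Bigr),\ \Bigl(\tfrac{\lambda}{1+\alpha},\tfrac{-\lambda}{1+\alpha}\Bigr),
\]
which is $U$ together with a trapezoid of height $\frac{\lambda}{1+\alpha}$ and parallel sides $1$ and $\frac{1}{1+\alpha}$, giving exactly the stated area. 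No summation over tiles, no AM--GM, and no ``global'' constraint is needed; the factor $1/(1+\alpha)$ appears simply because each $|a_i'|$ is individually bounded by $1/(1+\alpha)$.
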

		\begin{proof} 
			We show that all level~$1$ $A_i$-parallelograms lie in the hexagon with 
			corner points 
			\[\{(0,0),(0,1),(1,1),(1,0), 
			(\frac{\lambda+1}{1+\alpha},\frac{-\lambda}{1+\alpha}),
			(\frac{\lambda}{1+\alpha},\frac{-\lambda}{1+\alpha})\}\]
			(see right half of Figure~\ref{fig:lowerbound34}). 
			This hexagon consists of the unit square and a
			trapezoid with height~$\frac{\lambda}{1+\alpha}$ and base lengths~$1$
			and~$\frac{1}{1+\alpha}$, therefore it has an area of
			\[
				1+\frac{1}{2}\frac{\lambda}{1+\alpha}\left(1+\frac{1}{1+\alpha}\right)
				=\frac{2(1+\alpha)^2+\lambda(2+\alpha)}{2(1+\alpha)^2}\,.
			\] 
			Clearly, no parallelogram crosses the top or left boundary of~$U$.
			The largest rectangle in~$t_i'$ with~$a_i'$ as its bottom edge has
			area at most~$\frac{1}{\beta}\area(t_i)$. Moreover, since~$t_i'$ is higher 
			than the right tip of~$t_i$, this rectangle also has a larger height than the
			right tip, which has area at least~$\frac{\alpha}{\beta}\area(t_i)$ (see
			Figure~\ref{fig:lowerbound1a}). The area of the right tip of~$t_i$ is 
			at most its height times its width,~$|a_i|-|a_i'|$. 
			Therefore,~$\alpha|a_i'|<|a_i|-|a_i'|$ and~$(1+\alpha)|a_i'|<|a_i|$. 
			Since~$\lambda<\alpha$ and since the bottom-right corner of~$A_i$ is 
			exactly~$\lambda|a_i'|$ 
			to the right of line segment~$a_i'$, the bottom-right corner of~$A_i$ has 
			$y$-coordinate at most 
			\[
				|a_i'| + \lambda|a_i'| < (1+\alpha)|a_i' < |a_i|
			\]
			larger than the $y$-coordinate of $p_i$. Therefore,~$A_i$ lies completely 
			inside~$\Delta_i$ and no parallelogram crosses the right boundary of~$U$. 
			
			Since the bottom-right corner of~$A_i$ is~$\lambda|a_i'|$ below~$a_i'$, 
			the line through the bottom-right corner of~$A_i$ and the point at the 
			right end of~$a_i$ has a
			slope of at most~$\frac{\lambda}{\alpha-\lambda}$ (see the left half of 
			Figure~\ref{fig:lowerbound34}). Since the point at the right end of~$a_i$ 
			lies inside~$U$, the parallelograms must therefore lie above the line
			$y=\frac{\lambda}{\alpha-\lambda}(x-1)$.  
			
			No parallelogram crosses
			the line~$y=-x$, as the parallelograms have an angle of~$45^\circ$. 
			Since~$|a_i|\leq 1$,
			we have~$|a_i'|<\frac{1}{1+\alpha}$, so the height of all parallelograms is
			less than~$\frac{\lambda}{1+\alpha}$. Therefore, all parallelograms are above
			the line~$y=-\frac{\lambda}{1+\alpha}$ and they all lie in the hexagon. 
			Since, by construction, no two level~$1$ parallelograms overlap their 
			total area is bounded by~$\frac{2(1+\alpha)^2+\lambda(2+\alpha)}{2(1+\alpha)^2}$.
		\end{proof} 
	
		\begin{figure}
			\centering 
			\includegraphics[width=\linewidth]{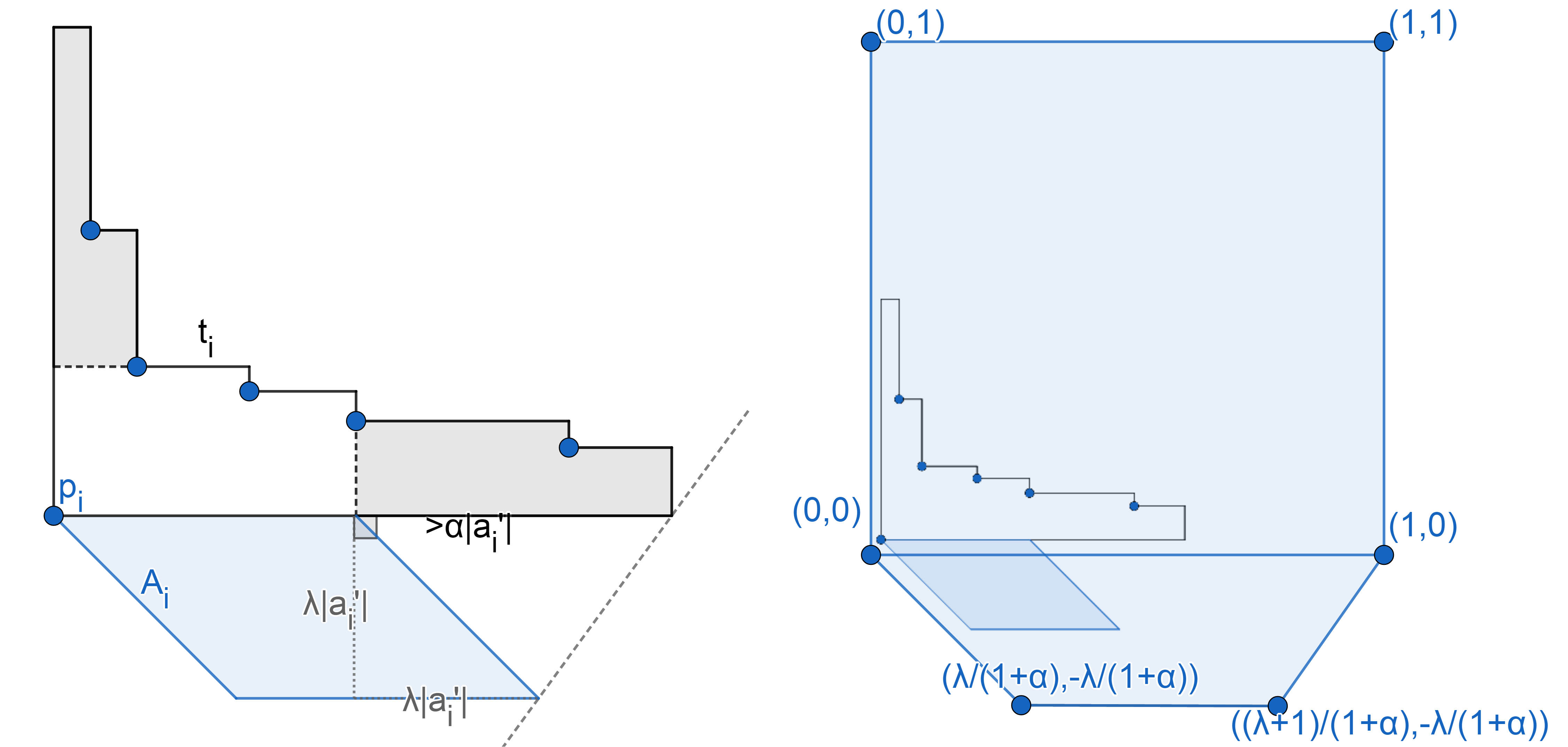} 
			\caption{Left: the slope of the line through the bottom right corner of~$A_i$ 
				and the right end of~$a_i$ is at most~$\frac{\lambda}{\alpha-\lambda}$. 
				Right: all parallelograms lie inside the blue hexagon.} \label{fig:lowerbound34} 
		\end{figure}
	
		Now we derive an upper bound on the area of the parallelograms of level~$2$ and
		higher. We do this by charging their area to the level~$1$ parallelograms. 
		In the proof, we need the following lemma from \cite{Dumitrescu2015}.
		
		\begin{lemma}[Lemma~3.3 in \cite{Dumitrescu2015}]\label{lem:DT3.3}
			For every~$i\in\{1,\ldots,n\}$, the interior of $\Delta_i$ (resp., $\Gamma_i$) is disjoint
			from $S$.
		\end{lemma}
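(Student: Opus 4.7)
The plan is to prove the claim for $\Delta_i$ by contradiction; the case of $\Gamma_i$ is symmetric, with the upward ray in the argument replaced by the rightward ray. Let $p_i$ denote the point in $S$ anchoring $t_i$, and suppose, for contradiction, that $T := S \cap \operatorname{int}(\Delta_i)$ is nonempty. The open triangle $\operatorname{int}(\Delta_i)$ is bounded above by $y = y(p_i)$, on the right by $x = x(p_i) + |a_i|$, and on the lower-left by the hypotenuse lying on $x + y = x(p_i) + y(p_i)$. In particular, every $q \in T$ satisfies $x(q) + y(q) > x(p_i) + y(p_i)$, hence $q \succ p_i$ and $q$ is processed strictly before $p_i$ by \tp{}.

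I would take $q \in T$ to be the \emph{earliest} point of $T$ under $\succ$ and examine the upward ray from $q$; let $Y$ denote the $y$-coordinate of its upper endpoint. If $Y \ge y(p_i)$, then the vertical segment $\{x(q)\} \times [y(q), Y]$ is already present when \tp{} subsequently processes $p_i$, and, since $x(p_i) < x(q)$, it forces $a_i$ to terminate at some $x$-coordinate at most $x(q)$. Hence $x(p_i) + |a_i| \le x(q)$, contradicting $q \in \operatorname{int}(\Delta_i)$. Otherwise $Y < y(p_i)$, which means the upward ray was stopped by the horizontal ray of some previously processed point $p' \succ q$, with $p' = (x(p'), Y)$ and $x(p') \le x(q)$. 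I would then verify that $p' \in T$: we have $y(p') = Y < y(p_i)$, $x(p') \le x(q) < x(p_i) + |a_i|$, and $x(p') + y(p') \ge x(q) + y(q) > x(p_i) + y(p_i)$ because $p' \succ q$ and $q \in T$. Thus $p' \in T$ is processed strictly before $q$, contradicting the choice of $q$.

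The step I expect to be the most delicate is the second-case verification that the blocking point $p'$ itself lies in $\Delta_i$. This works because the hypotenuse of $\Delta_i$ lies on a level set of $x + y$ — precisely the quantity by which $\succ$ orders the points — so every point processed before $q$ automatically sits on the interior side of the hypotenuse, which is exactly what powers the minimality contradiction. Beyond this observation I foresee no further obstacles: once one fixes the standard conventions for tie-breaking in $\succ$ and for whether rays include their endpoints, the inequalities used above are strict as required.
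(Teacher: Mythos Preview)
Your argument is correct. The paper itself does not supply a proof of this lemma; it merely quotes the statement from \cite{Dumitrescu2015}, so there is no in-paper proof to compare against. Your contradiction via the $\succ$-earliest point of $S\cap\operatorname{int}(\Delta_i)$, followed by the two-case analysis on where its upward ray terminates, is exactly the natural proof and matches the argument in Dumitrescu--T\'oth. The only observation worth recording is the one you already flagged: the hypotenuse of $\Delta_i$ lies on a level line of the sorting key $x+y$, so any point processed before $q$ that could block $q$'s upward ray automatically lies on the interior side of the hypotenuse, which is precisely what forces $p'\in T$ and closes the induction.
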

	
		\begin{lemma} \label{lem:lb3} 
			For every parallelogram~$A_j$ at level 1, the total
			area of all parallelograms~$A_i$, with~$i\neq j$, 
			for which there is a directed paths in~$G$ to~$A_j$,  
			is at most~$\frac{1}{2(\alpha-\lambda)}\area(A_j)$.
		\end{lemma}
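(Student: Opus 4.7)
The plan is a bottom-up packing argument. First, I would pass to the sheared coordinates $u=x+y$, which preserve area and turn every parallelogram~$A_i$ into the axis-aligned rectangle $[u_i,u_i+|a_i'|]\times[y_i-\lambda|a_i'|,y_i]$ (with $u_i=x(p_i)+y(p_i)$) and every triangle~$\Delta_i$ into a right triangle with legs on $u=u_i$ and $y=y_i$ of length~$|a_i|$. In this picture, intersection checks for parallelograms reduce to interval-overlap checks in~$u$ and in~$y$.

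The first substantive step is the structural claim that for every edge $A_i\to A_k$ of~$G$, $u_k\le u_i\le u_k+|a_k'|$. The upper bound is forced by $u$-overlap of the two rectangles. For the lower bound, Lemma~\ref{lem:DT3.3} combined with the containment $A_k\subset\Delta_k$ (established inside the proof of Lemma~\ref{lem:lb2}) gives $p_k\notin\mathrm{int}(\Delta_i)$, and a case analysis analogous to the one in Lemma~\ref{lem:lb2} rules out every alternative except $u_k\le u_i$. Next, I would observe that the direct children of any common parent have pairwise disjoint interiors: if two children overlapped, the outgoing edge of the higher one would, by construction of~$G$, point to the lower child rather than skipping it to reach the parent. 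Since every child's $y$-range contains its parent's top $y$-coordinate, this disjointness forces the children's $u$-intervals to be pairwise disjoint subintervals whose left endpoints lie in the parent's $u$-range.

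With these pieces in hand, I would finish by induction on tree depth. Writing $F(A_k)$ for the total area of the strict descendants of~$A_k$, the recursion $F(A_k)=\sum_{A_i\to A_k}(\area(A_i)+F(A_i))$, together with the inductive hypothesis applied to each child, reduces the step to a purely quantitative bound of the form $\sum|a_i'|^2\le c\,|a_k'|^2$. The main obstacle is this quadratic bound: disjointness of the children's $u$-intervals controls $\sum|a_i'|$ only linearly, so lifting the estimate to the quadratic scale requires also exploiting the slack $(1+\alpha)|a_i'|<|a_i|$ together with $\lambda<\alpha$, which leaves a horizontal margin of width $(\alpha-\lambda)|a_i'|$ along the hypotenuse of~$\Delta_i$ that is not used by~$A_i$ itself. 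I expect the target constant $\frac{1}{2(\alpha-\lambda)}$ to emerge by combining that margin with an AM--GM estimate in the spirit of the closing step of Lemma~\ref{lem:lb1}. Making the constants tight simultaneously across every level of the tree will be the most delicate part.
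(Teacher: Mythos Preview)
Your structural groundwork is sound: the shear $u=x+y$ is a clean device, and your claims about $u$-interval containment and about siblings being pairwise disjoint are correct and match what the paper proves (in the paper's language, the containment $A_i\subset C_k\cup A_k$ and the disjointness of the strips~$\Xi_i$). The gap is in the closing inductive step.

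For the recursion $F(A_k)=\sum_{A_i\to A_k}\bigl(\area(A_i)+F(A_i)\bigr)$ to close with the target constant $c=\tfrac{1}{2(\alpha-\lambda)}$, you would need the per-level inequality
\[
\sum_{A_i\to A_k}|a_i'|^2\ \le\ \frac{c}{1+c}\,|a_k'|^2\ =\ \frac{1}{1+2(\alpha-\lambda)}\,|a_k'|^2\,.
\]
This is simply false under the constraints you have. Nothing prevents a single child $A_i$ with $y_i$ barely above $y_k$ and $|a_i'|$ arbitrarily close to $|a_k'|$; then the left side is essentially $|a_k'|^2$. The horizontal margin $(\alpha-\lambda)|a_i'|$ only controls a linear quantity, and AM--GM (or Cauchy--Schwarz) bounds $\sum|a_i'|^2$ from \emph{below}, not above. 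Any level-by-level scheme of this shape incurs a multiplicative loss at every step and blows up on a long chain of nearly equal parallelograms.

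The paper sidesteps this by a one-shot global packing rather than a recursion. It introduces the triangle $C_j$---in your sheared picture the right triangle with base $[u_j,u_j+|a_j'|]$ at height $y_j$ and apex at $\bigl(u_j,\,y_j+\tfrac{\lambda}{\alpha-\lambda}|a_j'|\bigr)$---and proves: (a) whenever $A_i\to A_k$, one has $A_i\subset C_k\cup A_k$ and $C_i\subset C_k$; and (b) translating every descendant along the $45^\circ$ direction (purely in $y$, in your $(u,y)$ picture) down to the first collision with an already-placed parallelogram produces a pairwise disjoint packing of all descendants inside $C_j$. The bound then drops out as
\[
\sum_{i\ne j}\area(A_i)\ \le\ \area(C_j)\ =\ \tfrac12\cdot\tfrac{\lambda}{\alpha-\lambda}\,|a_j'|^2\ =\ \tfrac{1}{2(\alpha-\lambda)}\,\area(A_j)\,.
\]
Your sheared coordinates are in fact a very natural setting for this translation-and-packing argument; the missing idea is to pack into $C_j$ directly instead of attempting a recursive quadratic estimate.
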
 
	
		\begin{figure} 
			\centering
			\includegraphics[width=\linewidth]{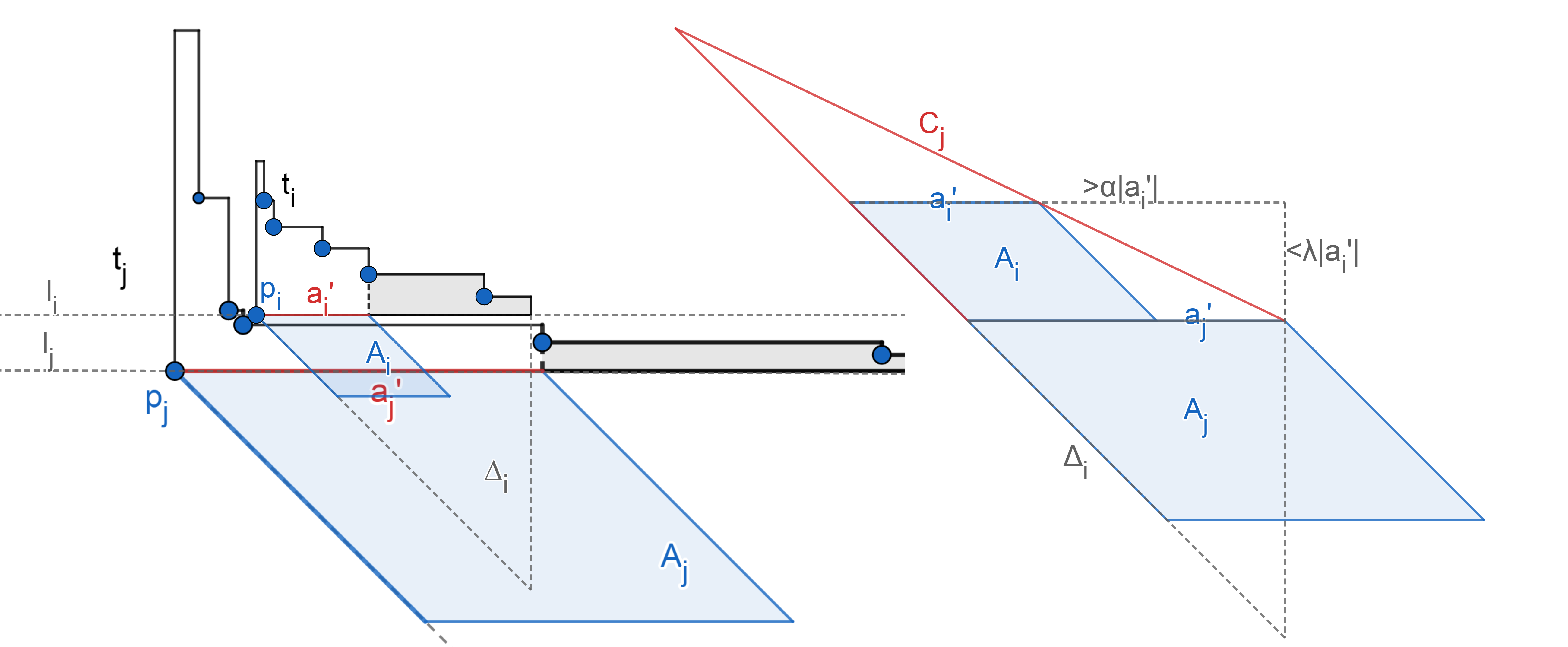}
			\caption{Left: parallelograms~$A_i$ and~$A_j$ intersect and~$a_i'$ lies 
				above~$a_j'$. Right: parallelograms~$A_i$ and~$A_j$ intersect 
				and~$a_i'$ lies above~$a_j'$.}
			\label{fig:lowerbound56} 
		\end{figure}
	
		\begin{proof} 
			For all~$i$, denote by~$l_i$ the line through~$a_i$. 
			We consider a parallelogram~$A_i$ that intersects a
			parallelogram~$A_j$ where~$a_i'$ lies above~$a_j'$. That is, there is an 
			edge from~$A_i$ to~$A_j$ in the graph~$G$.
			Consider the intersection of~$\Delta_i$ with~$l_j$ (see Figure~\ref{fig:lowerbound56}).
			If the intersection of~$A_i$ 
			with~$l_j$ contains any point of~$U$ left of~$a_j'$, the point~$p_j$ 
			lies inside~$A_i$ and hence inside~$\Delta_i$.
			If it contains any point to the right of~$a_j'$, then the top left corner of 
			the right tip of~$t_j$, which is a point in~$S$, 
			lies inside~$\Delta_i$. 
			By Lemma~\ref{lem:DT3.3}, we know that the intersection of~$S$ 
			and the interior of~$\Delta_i$ is empty. Therefore, the
			intersection of~$\Delta_i$ with~$l_j$ must be a subset of~$a_j'$.
			
			Define
			the triangle~$C_j$ as the triangle that is bounded by~$a_j'$, by the line with
			slope~$-1$ through~$p_j$, and the line with slope~$-\frac{\lambda}{\alpha}$
			through the point at the right end of~$a_j'$. 
			We show that~$A_i$ lies completely inside~$C_j\cup A_j$.
			See the right half of Figure~\ref{fig:lowerbound56}.
			Since the intersection of $\Delta_i$ with~$l_j$ is a subset of~$a_j$, 
			which is one side of~$A_J$, 
			the parallelogram~$A_i$ is not larger than~$A_j$ and the part of~$A_i$ that 
			lies below~$l_j$ is contained in~$A_j$. Thus,
			it suffices to 
			show that~$A_i$ lies to the right of 
			the line with slope~$-1$ through~$p_j$ and under 
			the line with slope~$-\frac{\lambda}{\alpha}$
			through the point at the right end of~$a_j'$.
			
			Again, since the intersection of~$\Delta_i$
			with~$l_j$ is a subset of~$a_j'$, the parallelogram~$A_i$ must lie
			to the right of the line with slope~$-1$ through~$p_j$. 
			
			Since~$(1+\alpha)|a_i'|<|a_i|$, the right edge of~$\Delta_i$ (and also the
			right end point of~$a_j'$) lies at least~$\alpha|a_i'|$ to the right of the 
			right end point of~$a_i'$. Since~$A_i$ has height~$\lambda|a_i'|$, the 
			segment~$a_i'$ lies at most~$\lambda|a_i'|$ above~$a_j'$. 
			Thus, the right end of~$a_i'$ lies below the line with slope~$-\frac{\lambda}{\alpha}$
			through the point at the right end of~$a_j'$. 
			Thus,
			\begin{equation}
				\tag{i}\label{statement:i}\text{if~$A_i$ intersects~$A_j$ and~$a_i'$ lies 
				above~$a_j'$, then~$A_i$ lies inside~$C_j\cup A_j$.}
			\end{equation}
			
			Now, let~$\Xi_i$ be the strip that is bounded by the two lines tangent to the
			left and right edge of~$A_i$ (see Figure~\ref{fig:lowerbound7}). We prove by
			induction that the strips of parallelograms at
					the same level that have a directed path to the same parallelogram
					$A_j$ are interior-disjoint.
			
			As induction basis, we see that clearly the parallelograms at level~$1$ 
			have interior-disjoint~$\Xi_i$, as there is only one parallelogram.  
			
			Now, suppose as induction hypothesis that the strips of the parallelograms at
			level~$k$ that have a directed path to the same level~$1$ parallelogram are
			all interior-disjoint. Since $C_j$ lies in the strip~$\Xi_j$, we have from
			\eqref{statement:i} that~$\Xi_i\subseteq\Xi_j$ if there is an edge from~$A_i$
			to~$A_j$ in~$G$. The parallelograms at level~$k+1$ each have
			an outgoing edge to a level~$k$ parallelogram, so it follows from the
			induction hypothesis that the strips~$\Xi_i$ are interior-disjoint if they
			have outgoing edges to different level~$k$ parallelograms. 
			
			Now consider two level~$k+1$ parallelograms~$A_{i_1}$ and~$A_{i_2}$ with outgoing
			edges to the same parallelogram~$A_j$. Suppose that~$\Xi_{i_1}$ and~$\Xi_{i_2}$ 
			are not interior-disjoint. 
			
			If~$A_{i_1}$ and~$A_{i_2}$ intersect, w.l.o.g.,~$a_{i_1}$ lies higher than~$a_{i_2}$, 
			but then~$A_{i_1}$ has an outgoing edge to~$A_{i_2}$ and not to~$A_j$, which
			is a contradiction. See Figure~\ref{fig:lowerbound7}.
			
			If~$A_{i_1}$ and~$A_{i_2}$ do not
			intersect, then one parallelogram must lie completely above the other.		
			W.l.o.g., let~$A_{i_1}$ lie above~$A_{i_2}$. However, then~$A_{i_1}$ cannot
			intersect~$A_j$, as its bottom edge lies above~$a_{i_2}'$, which lies above 
			the top edge of~$A_j$ since~$A_{i_2}$ has an edge to~$A_j$. Again, a contradiction.
			
			Thus, all parallelograms at level~$k+1$ that have directed paths to the same level~$1$
			parallelogram have interior-disjoint~$\Xi_i$.

		\begin{figure}[tbp] 
			\centering
			\includegraphics[width=0.48\textwidth]{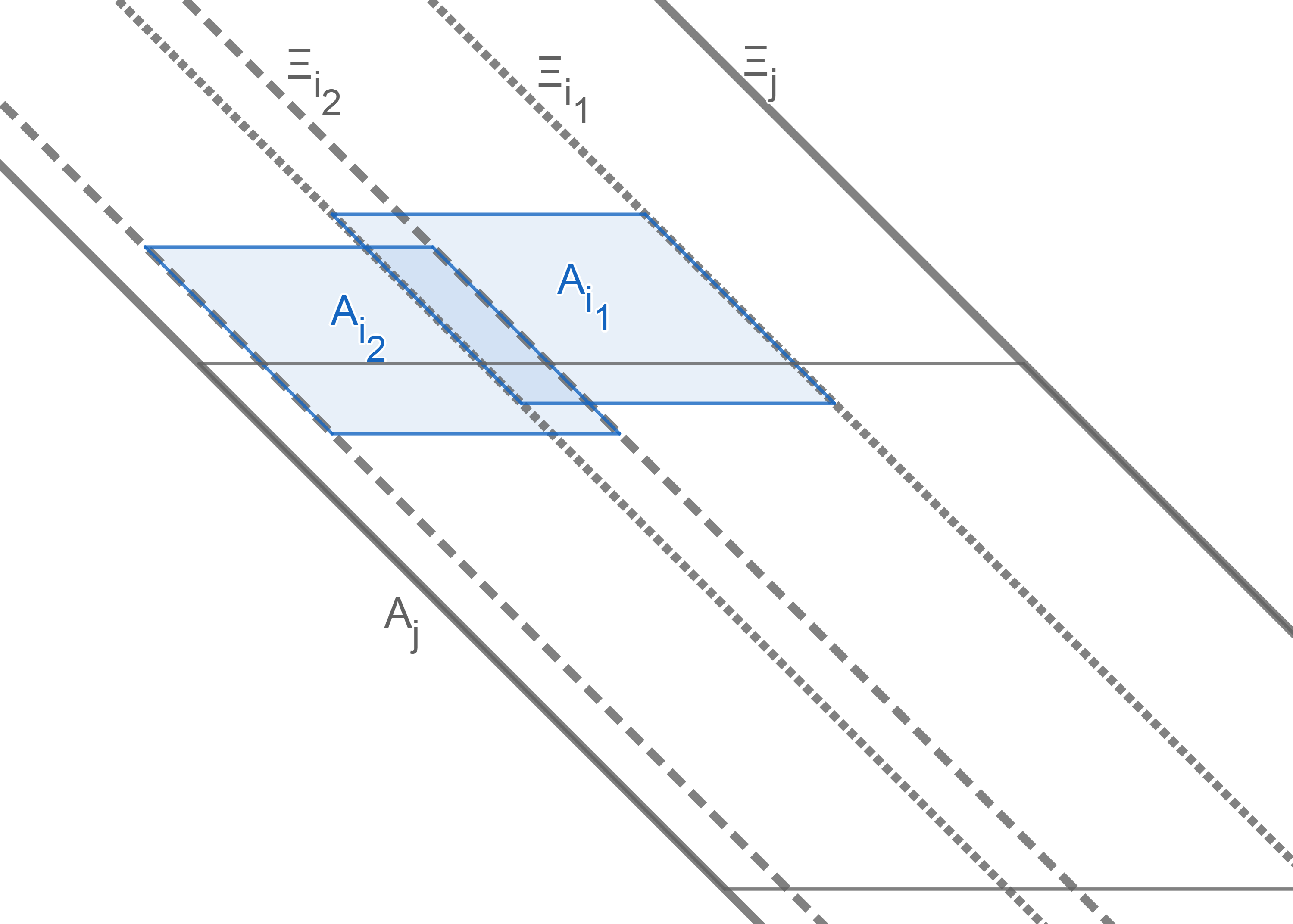}
			\caption{The strips~$\Xi_{i_1}$,~$\Xi_{i_2}$ and~$\Xi_j$. If the strips
				$\Xi_{i_1}$ and~$\Xi_{i_2}$ overlap, parallelograms~$A_{i_1}$ and~$A_{i_2}$
				overlap, and if~$a_{i_1}$ is above~$a_{i_2}$, there cannot be an edge from
				$A_{i_1}$ to~$A_j$.} \label{fig:lowerbound7} 
		\end{figure}
		
			Now, for a parallelogram~$A_j$ at level $1$, we apply a translation on all
			parallelograms~$A_i$ with a directed path to~$A_j$ in~$G$ as follows. 
			We translate all parallelograms per level increasing in level. When level~$k$ 
			is translated, translate all parallelograms one
			by one, parallel to the line~$y=-x$ upwards, and translate each parallelogram
			exactly far enough so that their interior does not overlap with any parallelogram at a
			lower level. Since they also do not intersect with parallelograms in their own level,
			no two parallelograms overlap at the end of the translation.
			
			We prove that after this transformation all parallelograms lie inside~$C_j$. 
			Again, we prove this by induction on the levels. 
			
			For the induction basis, no two parallelograms at level~$2$ can
			overlap after the translation. Now, since after the translation level~$2$ 
			parallelograms still 
			intersect~$A_j$ with their boundary, by~\eqref{statement:i}, they lie inside~$C_j$. 
			
			Now, suppose as induction hypothesis that all
			translated parallelograms fit inside~$C_j$ after level~$k$ is translated. Then, any
			parallelogram~$A_i$ at level~$k+1$ has an edge to a level~$k$ parallelogram,
			say to~$A_{i_0}$. Again by~\eqref{statement:i}, after the translation~$A_i$
			fits inside the translated version of~$C_{i_0}$. This triangle~$C_{i_0}$ fits 
			inside~$C_j$, since its bottom edge lies inside~$C_j$ and it is similar to~$C_j$.
			Therefore, the parallelograms at level~$k+1$ also fit inside~$C_j$.
			
			The base of~$C_j$ is~$a_j'$ and its height 
			is~$\frac{\lambda}{\alpha-\lambda}|a_j'|$. Therefore the area 
			of all parallelograms with a directed path to~$A_j$ is bounded by
			\[
				\frac{1}{2}\frac{\lambda}{\alpha-\lambda}|a_j'|^2
				=\frac{1}{2(\alpha-\lambda)}\area(A_j)\,.\qedhere
			\]
		\end{proof}
	
		Combining Lemmata \ref{lem:lb2} and \ref{lem:lb3} we get an upper bound on the
		area of the parallelograms~$A_i$. By symmetry, the same bound holds for the
		parallelograms~$B_i$. Combining these bounds with Lemma \ref{lem:lb1}, we get
		\begin{align*} 
			\sum_{t_i \text{ is a~$\beta$-tile}}\area(t_i) &\stackrel{\ref{lem:lb1}}{<} \sum_{i
			\text{ for } t_i \text{ a~$\beta$-tile}} \frac{\beta}{2\lambda
			e^{\beta-3-2\alpha}}\left(\area(A_i)+\area(B_i)\right)\\
			&\stackrel{\ref{lem:lb3}}{\leq} \frac{\beta}{2\lambda 
			e^{\beta-3-2\alpha}}\left(1+\frac{1}{2(\alpha-\lambda)}\right)\left(\sum_{A_i \text{level 1}} \area(A_i)+\sum_{B_i \text{level 1}} \area(B_i)\right)\\
			&\stackrel{\ref{lem:lb2}}{\leq} \frac{\beta}{2\lambda e^{\beta-3-2\alpha}}\frac{1+2\alpha-2\lambda}{2\alpha-2\lambda}\cdot 2\cdot\frac{2(1+\alpha)^2+\lambda(2+\alpha)}{2(1+\alpha)^2}\,.
		\end{align*} 
		Thus, we define 
		\begin{equation}
			F(\beta,\lambda,\alpha)=\frac{(1+2\alpha-2\lambda)(2(1+\alpha)^2+\lambda(2+\alpha))e^{3+2\alpha}}
			{4\lambda(\alpha-\lambda)(1+\alpha)^2 }\frac\beta{e^{\beta}}\,,\label{eq:area}
		\end{equation}
		which bounds the total area of~$\beta$-tiles from above. 
		At this point, we can bound the area that the \tp{} algorithm covers from below by
		realizing that all tiles that are not~$\beta$-tiles contain rectangles that
		cover at least~$1/\beta$ of their total area. Therefore, we can bound the total area that 
		the \tp{} algorithm covers by
		\begin{align*}
			\area(R) \ge \frac1\beta\left(1-F(\beta,\lambda,\alpha)\right)
			= 	\frac1\beta - \frac{(1+2\alpha-2\lambda)(2(1+\alpha)^2+\lambda(2+\alpha))e^{3+2\alpha}}
			{4\lambda(\alpha-\lambda)(1+\alpha)^2 }\frac1{e^{\beta}}\,.
		\end{align*}
		Optimization of the parameters through numerical methods\footnote{See Appendix~\ref{app:0.08}.} has provided us 
		with the parameter values~$(\beta,\lambda,\alpha) = (11.31,0.7696,0.4456)$. With which we obtain a 
		bound on the total area covered of~$0.0806$. This bound can be further improved 
		by considering~$\beta$ as a continuous variable and integrating over all values of $\beta$ 
		that contribute positively to the bound. This method is explained by 
		Dumitrescu and Tóth~\cite{Dumitrescu2015}. This provides a bound of 
		
		\begin{equation} 
			\area(R)\geq \frac1{\beta_0} - \frac{(1+2\alpha-2\lambda)(2(1+\alpha)^2+\lambda(2+\alpha))e^{3+2\alpha}}
			{4\lambda(\alpha-\lambda)(1+\alpha)^2 }
			\int_{\beta_0}^\infty 
			\frac{1}{\beta e^{\beta}}\ud\beta\,.\label{eq:integral}
		\end{equation}
		We have included a proof of this bound in Appendix~\ref{app:integral}.

		Again, by using numerical optimization methods\footnote{See Appendix~\ref{app:0.10}.} we found that the 
		parameter values~$(\beta_0,\lambda,\alpha)=(8.6142,0.44581,0.76975)$
		yield~$\area(R)\geq 0.10390$. 		
		
		\begin{theorem}\label{thm:Lowerbound}
			The \greedy{} algorithm yields an anchored rectangle packing 
			with an area of at least~$0.10390$.
		\end{theorem}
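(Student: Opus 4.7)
My plan is to assemble the pieces that have already been set up and then apply the continuous refinement on the parameter $\beta$. By Lemma~\ref{lem:tilepacking}, it suffices to prove the area bound for the output $R$ of the \tp{} algorithm, since the \greedy{} algorithm covers at least as much per point. I would fix arbitrary constants $\alpha>0$ and $\lambda\in(0,\alpha)$ (to be chosen at the end) and, for each $\beta\ge 3+2\alpha$, combine Lemmata~\ref{lem:lb1}, \ref{lem:lb2}, and~\ref{lem:lb3} in the chain of inequalities already displayed in the excerpt to obtain that the total area of $\beta$-tiles is strictly less than $F(\beta,\lambda,\alpha)$, where $F$ is defined in~\eqref{eq:area}.

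Next, I would turn this into a lower bound on $\area(R)$. Every tile that is not a $\beta$-tile contains an anchored rectangle of area at least $\tfrac{1}{\beta}$ times the tile's area; summing the areas of the inscribed maximum rectangles over all tiles then yields
\[
\area(R)\ge \tfrac{1}{\beta}\bigl(1-F(\beta,\lambda,\alpha)\bigr).
\]
This is already good enough for a constant bound around $0.08$, but, as pointed out in the excerpt, one can do better by treating $\beta$ as a continuous parameter: for each $\beta_0$ one writes $1 = \int_{\beta_0}^{\infty} \tfrac{1}{\beta^2}\,\mathrm{d}\beta \cdot \beta_0$ and peels off contributions from the $\beta$-tiles over the whole range $[\beta_0,\infty)$, each time only paying the marginal density of tiles that become $\beta$-tiles. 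This is the argument of Dumitrescu and Tóth that is reproduced in Appendix~\ref{app:integral}, and it delivers the stronger bound~\eqref{eq:integral}.

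Finally, I would substitute the numerically optimized parameters $(\beta_0,\lambda,\alpha)=(8.6142,0.44581,0.76975)$ into~\eqref{eq:integral}. A direct evaluation (with the exponential integral computed numerically) gives $\area(R)\ge 0.10390$, and the theorem follows by combining this with Lemma~\ref{lem:tilepacking}.

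The main obstacle is not conceptual but technical: one must verify that the numerical parameter triple genuinely satisfies $\beta_0\ge 3+2\alpha$ so that Lemma~\ref{lem:lb1} applies throughout the range of integration, and that the improper integral $\int_{\beta_0}^{\infty}\tfrac{1}{\beta e^{\beta}}\,\mathrm{d}\beta$ is evaluated with enough precision to certify the $0.10390$ threshold. I would carry out this verification by bounding the integral from above using $\int_{\beta_0}^{\infty}\tfrac{1}{\beta e^{\beta}}\,\mathrm{d}\beta \le \tfrac{1}{\beta_0}\int_{\beta_0}^{\infty}e^{-\beta}\,\mathrm{d}\beta = \tfrac{e^{-\beta_0}}{\beta_0}$ for a quick sanity check, and then using a tighter numerical estimate (essentially the exponential integral $E_1(\beta_0)$) to obtain the claimed constant; the optimization of $(\beta_0,\lambda,\alpha)$ is deferred to Appendix~\ref{app:0.10}.
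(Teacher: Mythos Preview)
Your proposal is correct and follows essentially the same route as the paper: reduce to \tp{} via Lemma~\ref{lem:tilepacking}, combine Lemmata~\ref{lem:lb1}--\ref{lem:lb3} into the bound $F(\beta,\lambda,\alpha)$ on the total area of $\beta$-tiles, then pass to the continuous-$\beta$ refinement of Appendix~\ref{app:integral} and plug in the numerically optimized triple $(\beta_0,\lambda,\alpha)$. Your extra sanity checks (that $\beta_0\ge 3+2\alpha$ holds for the chosen parameters, and the crude upper bound on the exponential integral) are not in the paper but are reasonable additions.
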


	\section{The GreedyPacking algorithm as an approximation algorithm}
	\label{sec:approximation}
		In this section, we treat the \greedy{} algorithm as an approximation algorithm
		for Bob's Problem. As such, we aim to bound the \emph{approximation ratio} of
		the algorithm, the worst-case ratio over all possible instances of the 
		area covered by a solution that the algorithm produces and the area 
		covered by an optimal solution. Since no optimal solution can ever cover more 
		than the whole unit square, Theorem~\ref{thm:Lowerbound} implies that the 
		approximation ratio of the \greedy{} algorithm is at least~$0.10390$. In this 
		section, we prove that the approximation ratio of the \greedy{} algorithm 
		for a large group of orderings is at most~$\frac{3}{4}$.
		
		Before we formally state the theorem for this section, we need some definitions. 
		We say that a function~$g:U\to \mathbb{R}$ is \emph{symmetric}
		if~$g((x,y))=g((y,x))$ for all~$(x,y)\in U$ and we say that~$g$ is
		\emph{strictly increasing} in~$x$, respectively~$y$, if~$g((x,y))>g((x',y))$ if~$x>x'$, 
		respectively if~$g((x,y))>g((x,y'))$ if~$y>y'$. For a given function~$g:U\to \mathbb{R}$, 
		we define the family of orderings~$\succ_g$ as~$p\succ_g q$ if~$g(p)>g(q)$ for all~$p,q\in	U$.
		For two points~$p$ and~$q$ we say that~$p$ \emph{dominates}~$q$ if~$x(p)\ge x(q)$ 
		and~$y(p)\ge y(q)$ and at least one of the inequalities is strict.
		Note that the function that~$\succ$ corresponds to ($g((x,y))=x+y$) is both symmetric 
		and strictly increasing in~$x$ and~$y$. Therefore, the following theorem bounds 
		the approximation ratio of a family of \greedy{} algorithms that include the one 
		that we have discussed in the preceding~sections.

	\begin{figure}
		\centering
		\includegraphics[width=\linewidth]{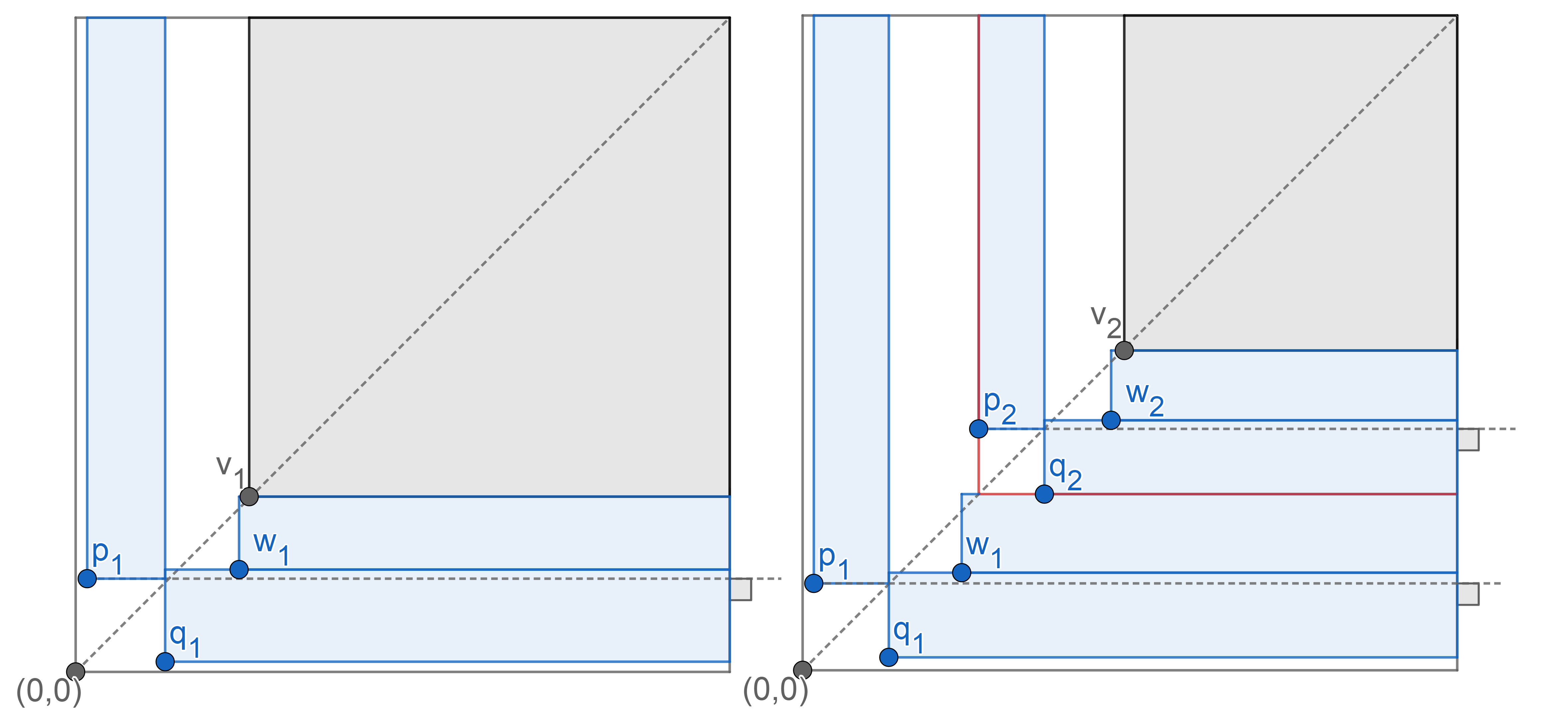}
		\caption{Left: the set~$S_{1,\eps}$, with the \greedy{} LLARP with $\succ_g$ ($r((0,0))$ is not drawn). Right: set~$S_{2,\eps}$. The choice of anchored rectangles is independent for the two `L-shapes', which are separated by the red lines.}
		\label{fig:approx_func}
	\end{figure}
	\begin{theorem} \label{thm:approx_func} 
		For any symmetric function~$g$ that is
		strictly increasing in both~$x$ and~$y$, the worst case approximation ratio for
		the \greedy{} algorithm with any ordering from~$\succ_g$ is at most~$\frac{3}{4}$.
	\end{theorem}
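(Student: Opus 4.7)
My plan is to prove the upper bound by exhibiting two explicit families of bad instances parameterized by $\eps > 0$, corresponding to the point sets $S_{1,\eps}$ and $S_{2,\eps}$ in Figure~\ref{fig:approx_func}. The hypotheses on $g$ impose two structural constraints on any ordering in $\succ_g$ that I will exploit: first, if $p$ dominates $q$, then $p \succ_g q$ (from strict monotonicity); second, if $p=(x,y)$ and $q=(y,x)$ are reflections across the diagonal, then $g(p)=g(q)$, so their relative order depends only on tie-breaking (from symmetry). Any instance that witnesses the $3/4$-bound must therefore be bad for \greedy{} under every ordering compatible with these two constraints.

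I would first analyze $S_{1,\eps}$, which consists of the origin together with a small collection of interior points placed near the upper-right corner. Dominance pins down the prefix of $\succ_g$, forcing \greedy{} to anchor rectangles at those interior points first. The positions are chosen so that the initial rectangles claim a large portion of the upper-right region and leave only a square-like region available at the origin whose area is at most roughly $3/4$ of the unit square. I would then compare against a reference LLARP that shrinks the interior rectangles to shift area back toward the origin, recovering nearly the whole unit square. Taking the ratio of the two areas and letting $\eps\to 0$ should yield the $3/4$-bound under every ordering in $\succ_g$ whose tie-breaking fixes a favorable order among the symmetric pairs in $S_{1,\eps}$.

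To handle the remaining tie-breakings robustly, I would use $S_{2,\eps}$, which is built from two mirror-image copies of the bad L-shape of $S_{1,\eps}$ placed in disjoint regions separated by the red lines of Figure~\ref{fig:approx_func}. Additional points on those separating lines are included so that every lower-left anchored rectangle at a point of one L-shape is prevented from extending into the other. Hence \greedy{} behaves independently on the two L-shapes, and since the L-shapes are reflections of each other across the diagonal, any tie-breaking of the symmetric pairs forces the bad configuration on at least one of them. Summing the covered areas over the two L-shapes and comparing to the corresponding reference packing yields the ratio $3/4$ uniformly over tie-breakings.

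The main obstacle is twofold. Verifying the decoupling of the two L-shapes in $S_{2,\eps}$ requires checking that the auxiliary blocking points actually obstruct every candidate anchored rectangle that might cross between the L-shapes, which is a geometric case analysis. More substantively, the instance must defeat \greedy{} for \emph{every} symmetric, strictly increasing $g$ simultaneously, so the ordering of every non-symmetric pair of points must be forced by dominance rather than by the particular functional form of $g$; arranging the interior points to achieve both the desired geometry and a dominance-pinned prefix is the delicate part. Once those positions are settled, the area accounting inside each L-shape is routine and the bound $3/4$ emerges in the limit $\eps\to 0$.
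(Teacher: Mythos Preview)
Your plan misreads the construction in several essential ways, and as written it would not produce the bound.

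First, the points of $S_{1,\eps}$ lie near the \emph{origin}, not the upper-right corner: $p_1=(0,\eps-\eps^3)$, $q_1=(\eps,0)$, $w_1=(2\eps,\eps)$, $v_1=(2\eps,2\eps)$. The rectangle $r((0,0))$ that \greedy{} picks has area $O(\eps^2)$, not roughly $3/4$; the comparison is between \greedy{}'s $3\eps-O(\eps^2)$ and OPT's $4\eps-O(\eps^2)$ inside the thin L-shape $U\setminus[2\eps,1]^2$, while both solutions agree on the big square anchored at $v_1$. Consequently, $S_{1,\eps}$ alone gives a ratio close to~$1$, not~$3/4$.

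Second, you misidentify how symmetry is used. The instance contains \emph{no} pair of points that are reflections of each other, so tie-breaking is never an issue. The trick is that $q_1$ dominates the reflection $p_1'=(\eps-\eps^3,0)$ of $p_1$; since $g$ is symmetric and strictly increasing, $g(q_1)>g(p_1')=g(p_1)$, which forces $q_1\succ_g p_1$ \emph{strictly} for every admissible $g$. This is precisely what lets one compare two incomparable points and pin down the entire order $v_1\succ_g w_1\succ_g q_1\succ_g p_1\succ_g(0,0)$ without any case split on $g$.

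Third, $S_{2,\eps}$ is not two mirror-image L-shapes placed side by side to hedge against tie-breaking; it is obtained by \emph{recursively} replacing $v_1$ with a scaled copy of the $S_{1,\eps}$ pattern inside $[2\eps,1]^2$. The red lines separate nested layers, not left/right halves. Iterating $n$ times drives $\area(r(v_n))=(1-2\eps)^{2n}\to 0$, so the per-layer $3\eps$ versus $4\eps$ discrepancy accumulates over the whole square and the global ratio tends to $3/4$ as $n\to\infty$ and $\eps\to 0$. Without this self-similar recursion, your two-copy idea still leaves a large region where \greedy{} and OPT coincide, and the ratio stays near~$1$.
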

	\begin{proof}
		We construct sets~$S_{n,\eps}$ for positive integers~$n$ and small~$\eps>0$. 
		We show that, for these sets, as~$n\to\infty$ and~$\eps\to 0$, the optimal 
		covered area approaches $1$ and the area cover by a \greedy{} algorithm 
		that satisfies the conditions approaches~$\frac{3}{4}$. 
		
		We define~$S_{n,\eps}$ recursively. The set~$S_{n,\eps}$ consists of~$4n+1$ points, 
		\[S_{n,\eps}=\{(0,0),p_1,\ldots,p_n,q_1,\ldots,q_n,v_1,\ldots,v_n,w_1,\ldots,w_n\}\,,\]
		where the coordinates 
		of each point (except the origin) depend on $\eps$.
		
		Before we define the recursive relation, we first look at~$S_{1,\eps}$.
		Let~$p_1=(0,\eps-\eps^3)$,
		let~$q_1=(\eps,0)$, let~$v_1=(2\eps,2\eps)$, and let~$w_1=(2\eps,\eps)$. 
		See also the left side of Figure~\ref{fig:approx_func}.
		We see that~$v_1$ dominates all other points and~$w_1$ dominates all other
		points except~$v_1$. Furthermore, the reflection of~$p_1$ in the line~$x=y$
		is~$p_1'=(\eps-\eps^3,0)$ and is dominated by~$q_1$.
		Since~$g$ is symmetric and strictly increasing,~$q_1\succ_g p_1$ and $q_1$ 
		is treated 
		before~$p_1$ by the algorithm. 
		Thus~$v_1\succ_g w_1\succ_g q_1\succ_g p_1\succ_g (0,0)$.  Denote the rectangle 
		anchored at~$p$ that the \greedy{} algorithm adds by~$r(p)$. Consider the
		options for~$r(q_1)$, there are two.
		The rectangle that is bounded by~$w_1$ and the right edge of~$U$, 
		and the one that is bounded by~$w_1$ and the top
		edge of~$U$. The former has an area of size 
		\[
			\left(1-x(q_1)\right)\left(y(w_1)-y(q_1)\right)= \left(1-\eps\right)\eps=\eps-\eps^2
		\] 	
		and the latter has an area of size 
		\[ 
			\left(x(w_1)-x(q_1)\right)\left(1-y(q_1)\right)=
			\left(2\eps-\eps\right)=\eps\,.
		\] 
		So both have an area of~$\eps-O(\eps^2)$, where we disregard larger powers of~$\eps$ 
		in the big-O notation. Moreover, independent of the rectangle 
		chosen for~$q_1$, for~$p_1$ the largest-area rectangle is bounded by the upper edge of~$U$ 
		and the left edge of~$r(q_1)$ and has an area of size
		\[
			\left(x(q_1)-x(p_1)\right)\left(1-y(p_1)\right) = 
			\eps\left(1-(\eps-\eps^3)\right) =
			\eps-\eps^2+\eps^4\,.
		\]
		For~$w_1$ the largest-area rectangle is bounded by~$v_1$ and the right edge of~$U$
		and has an area of size
		\[ 
			\left(1-x(w_1)\right)\left(y(v_1)-y(w_1)\right)=
			\left(1-2\eps\right)\left(2\eps-\eps\right)=
			\eps-2\eps^2\,.
		\] 
		Again, both~$r(p_1)$ and~$r(w_1)$ have an area of~$\eps-O(\eps^2)$. Moreover, the 
		rectangle~$r((0,0))$ has area~$O(\eps^2)$ and thus the total area covered 
		by~$r((0,0))$,~$r(p_1)$,~$r(q_1)$, and~$r(w_1)$ is~$3\eps-O(\eps^2)$.
		
		There exists a solution that covers~$4\eps-O(\eps^2)$, namely by 
		taking for~$q_1$ the rectangle with upper-right point~$(1,\eps-\eps^3)$
		and for~$p_1$ the rectangle with upper-right point~$(2\eps,1)$. 
		The resulting set of rectangles covers~$4\eps-O(\eps^2)$.
		
		By replacing~$v_1$ by the same pattern in the square between~$v_1$ and the edges of~$U$, 
		we obtain~$S_{2,\eps}$ and iteratively we obtain from~$S_{k,\eps}$ the set~$S_{k+1,\eps}$.
		Note that for any $S_{n,\eps}$ the choices that we have for the rectangles in 
		any two subsets $S_{k,\eps}$ and $S_{k+1,\eps}\setminus S_{k,\eps}$, with $k\le n-1$,
		are independent. Thus, our analysis for $S_{1,\eps}$ holds for all 
		repeated patterns. See the right side of Figure~\ref{fig:approx_func}.
		
		Now consider~$S_{n,\eps}$, as~$\eps\to0$ we know that the \greedy{} algorithm 
		covers at most~$3/4$ of what an optimal solution covers in the area~$U\setminus r(v_n)$. 
		What remains is to show that as~$n\to\infty$ we have $\area(r(v_n))\to0$. 
		
		The area of~$r(v_1)$ is clearly~$(1-2\eps)^2$. When we repeat the pattern 
		for~$r(v_2)$, we see that its area is~$((1-2\eps)^2)^2$. In general, we 
		get that
		\[
			\lim_{n\to\infty}\area(r(v_n)) = \lim_{n\to\infty}((1-2\eps)^2)^n 
			= \lim_{n\to\infty}(1-2\eps)^{2n} = 0\,,
		\] 
		which finishes our proof.
	\end{proof}

	\section{Concluding remarks}	
	
	When we compare our analysis for the lower bound to that of Dumitrescu and 
	Tóth~\cite{Dumitrescu2015}, 
	we observe that our 
	analysis allows us to bound the area for smaller beta ($\beta_0$ is that value 
	for which $F(\beta,\lambda,\alpha)$ is equal to~$1$). Therefore, improving the 
	analysis not only by having a better bound for each~$\beta$, but also allowing 
	us to integrate over slightly more values of~$\beta$ in the final~stage.

	We believe that the most important take-away from our result is that, 
	while we adapt the analysis in several ways that could be viewed as major, 
	the improvement of the lower bound is minor. To us, this indicates that, indeed,
	as Dumitrescu and Tóth~\cite{Dumitrescu2015} have conjectured before us,
	significant improvements will not come from adapting this analysis, but rather 
	from a different proof altogether. We hope that the insights that we give here 
	can inspire someone to close the gap. 
	
	What can still be improved in the current analysis is the optimization over 
	the variables~$\lambda$ and~$\alpha$. In particular, if an analytical 
	optimum of~$F(\beta,\lambda,\alpha)$ can be found for~$\lambda$ and~$\alpha$
	for fixed~$\beta$, these values can be used in the computation of~\eqref{eq:integral}.
	This should then result in a better bound since both~$\lambda$ and~$\alpha$ depend 
	on~$\beta$ within the integral. 
	
	With respect to an upper bound on the performance of the \greedy{} algorithm, 
	we now know have a first upper bound on the approximation ratio. Clearly, 
	this does not tell us much about the original conjecture, since for the simple 
	example that shows the absolute upper bound of $1/2$ (see, e.g.,~\cite{Dumitrescu2015})
	the optimal and \greedy{} solutions coincide. In this respect, we still view 
	the open question ``is Bob's problem NP-hard'' as the most interesting one. 
	Although it would be nice to see the approximation ratio gap closed, this 
	may prove to be equally difficult as proving Freedman's Conjecture.

	\bibliographystyle{plainurl} 
	\bibliography{bib}

\begin{thebibliography}{10}

\bibitem{DBLP:conf/focs/AdamaszekW13}
Anna Adamaszek and Andreas Wiese.
\newblock {Approximation} {Schemes} {for} {Maximum} {Weight} {Independent}
  {Set} {of} {Rectangles}.
\newblock In {\em 54th Annual {IEEE} Symposium on Foundations of Computer
  Science ({FOCS} 2013)}, pages 400--409. {IEEE} Computer Society, 2013.
\newblock \href {https://doi.org/10.1109/FOCS.2013.50}
  {\path{doi:10.1109/FOCS.2013.50}}.

\bibitem{DBLP:conf/soda/AdamaszekW15}
Anna Adamaszek and Andreas Wiese.
\newblock {A} {quasi-{PTAS}} {for} {the} {Two-Dimensional} {Geometric}
  {Knapsack} {Problem}.
\newblock In {\em Proceedings of the Twenty-Sixth Annual {ACM-SIAM} Symposium
  on Discrete Algorithms ({SODA} 2015)}, pages 1491--1505. {SIAM}, 2015.
\newblock \href {https://doi.org/10.1137/1.9781611973730.98}
  {\path{doi:10.1137/1.9781611973730.98}}.

\bibitem{DBLP:conf/mfcs/AkitayaJST18}
Hugo~A. Akitaya, Matthew~D. Jones, David Stalfa, and Csaba~D. T{\'{o}}th.
\newblock {Maximum} {Area} {Axis-Aligned} {Square} {Packings}.
\newblock In {\em 43rd International Symposium on Mathematical Foundations of
  Computer Science ({MFCS} 2018)}, volume 117 of {\em LIPIcs}, pages
  77:1--77:15. Schloss Dagstuhl - Leibniz-Zentrum fuer Informatik, 2018.
\newblock \href {https://doi.org/10.4230/LIPIcs.MFCS.2018.77}
  {\path{doi:10.4230/LIPIcs.MFCS.2018.77}}.

\bibitem{Antoniadis2019}
Antonios Antoniadis, Felix Biermeier, Andr{\'{e}}s Cristi, Christoph Damerius,
  Ruben Hoeksma, Dominik Kaaser, Peter Kling, and Lukas N{\"{o}}lke.
\newblock On the complexity of anchored rectangle packing.
\newblock In {\em 27th Annual European Symposium on Algorithms ({ESA} 2019)},
  volume 144 of {\em LIPIcs}, pages 8:1--8:14. Schloss Dagstuhl -
  Leibniz-Zentrum f{\"{u}}r Informatik, 2019.
\newblock \href {https://doi.org/10.4230/LIPIcs.ESA.2019.8}
  {\path{doi:10.4230/LIPIcs.ESA.2019.8}}.

\bibitem{Balas2017a}
Kevin Balas, Adrian Dumitrescu, and Csaba~D. T{\'{o}}th.
\newblock {Anchored rectangle and square packings}.
\newblock {\em Discrete Optimization}, 26:131--162, nov 2017.
\newblock \href {https://doi.org/10.1016/j.disopt.2017.08.003}
  {\path{doi:10.1016/j.disopt.2017.08.003}}.

\bibitem{DBLP:conf/soda/BansalK14}
Nikhil Bansal and Arindam Khan.
\newblock {Improved} {Approximation} {Algorithm} {for} {Two-Dimensional} {Bin}
  {Packing}.
\newblock In {\em Proceedings of the Twenty-Fifth Annual {ACM-SIAM} Symposium
  on Discrete Algorithms ({SODA} 2014)}, pages 13--25. {SIAM}, 2014.
\newblock \href {https://doi.org/10.1137/1.9781611973402.2}
  {\path{doi:10.1137/1.9781611973402.2}}.

\bibitem{DBLP:journals/ijcga/BeregDJ10}
Sergey Bereg, Adrian Dumitrescu, and Minghui Jiang.
\newblock {Maximum} {Area} {Independent} {Sets} {in} {Disk} {Intersection}
  {Graphs}.
\newblock {\em Int. J. Comput. Geometry Appl.}, 20(2):105--118, 2010.
\newblock \href {https://doi.org/10.1142/S0218195910003220}
  {\path{doi:10.1142/S0218195910003220}}.

\bibitem{DBLP:journals/algorithmica/BeregDJ10}
Sergey Bereg, Adrian Dumitrescu, and Minghui Jiang.
\newblock {On} {Covering} {Problems} {of} {Rado}.
\newblock {\em Algorithmica}, 57(3):538--561, 2010.
\newblock \href {https://doi.org/10.1007/s00453-009-9298-z}
  {\path{doi:10.1007/s00453-009-9298-z}}.

\bibitem{Biedl2020}
Therese Biedl, Ahmad Biniaz, Anil Maheshwari, and Saeed Mehrabi.
\newblock Packing boundary-anchored rectangles and squares.
\newblock {\em Computational Geometry}, 88:101610, 2020.
\newblock \href {https://doi.org/https://doi.org/10.1016/j.comgeo.2020.101610}
  {\path{doi:https://doi.org/10.1016/j.comgeo.2020.101610}}.

\bibitem{Dumitrescu2015}
Adrian Dumitrescu and Csaba~D. T{\'{o}}th.
\newblock {Packing anchored rectangles}.
\newblock {\em Combinatorica}, 35(1):39--61, feb 2015.
\newblock \href {https://doi.org/10.1007/s00493-015-3006-1}
  {\path{doi:10.1007/s00493-015-3006-1}}.

\bibitem{Dumitrescu2012}
Adrian Dumitrescu and Csaba~D. Tóth.
\newblock Packing anchored rectangles.
\newblock In {\em 2012 Annual ACM-SIAM Symposium on Discrete Algorithms (SODA
  2012)}, pages 294--305, 2012.
\newblock \href {https://doi.org/10.1137/1.9781611973099.28}
  {\path{doi:10.1137/1.9781611973099.28}}.

\bibitem{IBM2004}
IBM.
\newblock Ponder this.
\newblock
  \url{https://www.research.ibm.com/haifa/ponderthis/challenges/June2004.html},
  2004.

\bibitem{Rado1}
Richard Rado.
\newblock {Some} {covering} {theorems} {(I)}.
\newblock {\em Proc. London Math. Soc.}, 51:232--264, 1949.

\bibitem{Rado2}
Richard Rado.
\newblock {Some} {covering} {theorems} {(II)}.
\newblock {\em Proc. London Math. Soc.}, 53:243--267, 1951.

\bibitem{Rado3}
Richard Rado.
\newblock {Some} {covering} {theorems} {(III)}.
\newblock {\em Proc. London Math. Soc.}, 42:127--130, 1968.

\bibitem{Rado0}
Tibor Rad{\'o}.
\newblock {Sur} {un} {probl\`eme} {relatif} {\`a} {un} {th\'eor\`eme} {de}
  {Vitali}.
\newblock {\em Fund. Math.}, 11:228--229, 1928.

\bibitem{Freedman69}
William~Thomas Tutte, editor.
\newblock {\em {Recent} {Progress} {in} {Combinatorics}: {Proceedings} {of}
  {the} {3rd} {Waterloo} {Conference} {on} {Combinatorics}}. Academic Press,
  1969.

\bibitem{Winkler2007}
Peter Winkler.
\newblock {\em Mathematical mind-benders}.
\newblock CRC Press, 2007.

\end{thebibliography}
	
%
%
	\appendix

	\section{Analysis of an improved lower bound through continuous \texorpdfstring{$\beta$}{β}}\label{app:integral}
	In this section we show how to construct the lower bound on the total area 
	covered by the \tp{} algorithm by letting~$\beta$ run over all values that 
	positively contribute to the bound. Intuitively, this means that~$\beta$ 
	runs from the value such that we can bound the total area of~$\beta$-tile by 
	less than one up to infinity. This analysis follows that of Section~3.3 
	in~\cite{Dumitrescu2015}.
	
	We showed that the total area of~$\beta$-tiles is bounded by~$F(\beta,\lambda,\alpha)$.
	Thus providing a lower bound on what the \tp{} algorithm can cover of 
	\[
		\frac{1-F(\beta,\lambda,\alpha)}{\beta}\,.
	\]
	If we consider, instead of one value, two different values for~$\beta$, 
	say~$\beta_0$ and~$\beta_1$, such that~$\beta_0>\beta_1$ and realize that 
	the set of~$\beta_1$-tiles contains all~$\beta_0$ tiles as well, we get 
	an improved lower bound of 
	\[
		\frac{1-F(\beta_0,\lambda,\alpha)}{\beta_0} + 
		\frac{F(\beta_0,\lambda,\alpha)-F(\beta_1,\lambda,\alpha)}{\beta_1} \,.
	\]
	In the same fashion, we can improve the lower bound by considering more 
	and more different values for~$\beta$. For an arbitrary integer~$k>0$ 
	and values~$\beta_0,\ldots,\beta_k$, we obtain a lower bound of 
	\begin{align*}
		\area(R)&\ge \frac{1-F(\beta_0,\lambda,\alpha)}{\beta_0} + 
		\sum_{i=1}^k \frac{F(\beta_{i-1},\lambda,\alpha)
			-F(\beta_{i},\lambda,\alpha)}{\beta_i}\\
		&= \frac{1-F(\beta_0,\lambda,\alpha)}{\beta_0} - 
		\sum_{i=1}^k \frac1{\beta_i}\frac{F(\beta_i,\lambda,\alpha)
			-F(\beta_{i-1},\lambda,\alpha)}{\beta_i-\beta_{i-1}}
			\left(\beta_i-\beta_{i-1}\right) \,.
	\end{align*}
	In particular, we can set~$\beta_i=\beta_{i-1}+\eps$ to get 
	\[
		\area(R)\ge\frac{1-F(\beta_0,\lambda,\alpha)}{\beta_0} - 
		\sum_{i=1}^k \frac\eps{\beta_i}\frac{F(\beta_{i-1}+\eps,\lambda,\alpha)
			-F(\beta_{i-1},\lambda,\alpha)}{\eps}
		 \,.
	\]
	When we now let~$k\rightarrow\infty$ and~$\eps\rightarrow0$, we get 
	\[
		\area(R)\ge\frac{1-F(\beta_0,\lambda,\alpha)}{\beta_0} - 
		\int_{\beta_0}^\infty \frac1{\beta}\frac{\partial F(\beta,\lambda,\alpha)
			}{\partial \beta}\ud\beta
		\,.
	\]
	Substituting~\eqref{eq:area}, this is equal to
	\begin{align*}
		\area(R)&\ge\frac1{\beta_0} - \frac{(1+2\alpha-2\lambda)(2(1+\alpha)^2+\lambda(2+\alpha))e^{3+2\alpha}}
		{4\lambda(\alpha-\lambda)(1+\alpha)^2 }\left(\frac1{e^{\beta_0}} - 
		\int_{\beta_0}^\infty 
		\left(\frac1{\beta}\frac{\ud}{\ud\beta}\frac{\beta}{e^{\beta}}\right)\ud\beta\right)\\
		&= \frac1{\beta_0} - \frac{(1+2\alpha-2\lambda)(2(1+\alpha)^2+\lambda(2+\alpha))e^{3+2\alpha}}
		{4\lambda(\alpha-\lambda)(1+\alpha)^2 }\left(\frac1{e^{\beta_0}} + 
		\int_{\beta_0}^\infty 
		\left(\frac1{\beta}\frac{\beta-1}{e^{\beta}}\right)\ud\beta\right)\\
		&= \frac1{\beta_0} - \frac{(1+2\alpha-2\lambda)(2(1+\alpha)^2+\lambda(2+\alpha))e^{3+2\alpha}}
		{4\lambda(\alpha-\lambda)(1+\alpha)^2 }\left(\frac1{e^{\beta_0}} + 
		\int_{\beta_0}^\infty 
		\left(\frac1{\beta e^{\beta}}-\frac{1}{e^{\beta}}\right)\ud\beta\right)\\
		&= \frac1{\beta_0} - \frac{(1+2\alpha-2\lambda)(2(1+\alpha)^2+\lambda(2+\alpha))e^{3+2\alpha}}
		{4\lambda(\alpha-\lambda)(1+\alpha)^2 }
		\int_{\beta_0}^\infty 
		\frac{1}{\beta e^{\beta}}\ud\beta\,.
	\end{align*}

	\section{Matlab code}
	\subsection{Code for 0.0806}\label{app:0.08}
	\begin{verbatim}
		y = @(l,b,a)-(1./b-(1+2*a-2*l)./(2*a-2*l).*(2*(1+a).^2+l.*(2+a))./
		(1+a).^2./l*exp(3+2*a-b)./2);
		yx = @(x) y(x(1), x(2),x(3));
		A = [];
		Aeq = [];
		b = [];
		Beq = [];
		lb = [0.01 5 0.51];
		ub = [0.5 100 1.5];
		[x, fval] = fmincon(yx, [0.3, 6,1], A, b, Aeq, Beq, lb, ub);
		x
		fval
	\end{verbatim}
	
	\subsection{Code for 0.10390}\label{app:0.10}
	\begin{verbatim}
		y = @(l,b,a)-(1./b-(1+2*a-2*l)./(2*a-2*l).*(2*(1+a).^2+l.*(2+a))./
		(1+a).^2./l*exp(3+2*a)*expint(b)./2);
		yx = @(x) y(x(1), x(2),x(3));
		A = [];
		Aeq = [];
		b = [];
		Beq = [];
		lb = [0.01 5 0.51];
		ub = [0.5 100 1.5];
		[x, fval] = fmincon(yx, [0.3, 6,1], A, b, Aeq, Beq, lb, ub);
		x
		fval
	\end{verbatim}
\end{document}